\def\Xset{\mathbb{X}}
\def\Yset{\mathbb{Y}}
\def\Xsigma{\mathcal{X}}
\def\Ysigma{\mathcal{Y}}
\def\Aset{\mathsf{A}}
\newcommand{\norm}[2]{\left\| #2\right\|_{#1}}
\newcommand{\Var}[2]{\mathbb{V}\mathrm{ar}_{#1}\left(#2\right)}
\newcommand{\TV}{\mathrm{TV}}
\newcommand{\e}{\mathrm{e}}
\newcommand{\tfcal}{\tilde{\mathcal{F}}}
\newcommand{\fcal}{\mathcal{F}}
\newcommand{\calC}{\mathcal{C}}
\newcommand{\norminf}[1]{|#1|_{\infty}}
\newcommand{\nset}{\mathbb N}
\newcommand{\one}{\bold{1}}
\newcommand{\rmd}{\mathrm{d}}
\newcommand{\rme}{\mathrm{e}}
\newcommand{\eqdef}{\ensuremath{\stackrel{\mathrm{def}}{=}}}
\newcommand{\eqsp}{\;}
\newcommand{\bigo}[1]{\mathcal{O}\left(#1\right)}
\newcommand{\iid}{i.i.d.}
\newcommand{\XinitIS}[2][]{\ifthenelse{\equal{#1}{}}{\ensuremath{\rho_{#2}}}{\ensuremath{\check{\rho}_{#2}}}}
\newcommand{\kiss}[3][]
{\ifthenelse{\equal{#1}{}}{p_{#2}}
{\ifthenelse{\equal{#1}{fully}}{p^{\star}_{#2}}
{\ifthenelse{\equal{#1}{smooth}}{\tilde{r}_{#2}}{\mathrm{erreur}}}}}
\newcommand{\genPart}[3][]{\boldsymbol{\xi}^{#2,#3}\ifthenelse{\equal{#1}{}}{}{[#1]}}
\newcommand{\genRePart}[3][]{\tilde{\boldsymbol{\xi}}^{#2,#3}\ifthenelse{\equal{#1}{}}{}{[#1]}}
\newcommand{\genWeight}[2]{\omega^{#1,#2}}
\newcommand{\genTarget}{\Pi}
\newcommand{\moveKernel}{Q}
\newcommand{\var}[2][]{\mathbb{V}\mathrm{ar}\left( #2 \ifthenelse{\equal{#1}{}}{}{\middle|#1}\right)}
\newcommand{\dlim}{\ensuremath{\stackrel{\mathcal{D}}{\longrightarrow}}}
\newcommand{\plim}{\ensuremath{\stackrel{\mathbb{P}}{\longrightarrow}}}
\newcommand{\osc}[1]{\mathrm{osc}\left(#1\right)}
\newcommand{\post}[3][]%
{
\ifthenelse{\equal{#1}{}}{\ensuremath{\pi_{#2|#3}}}%
{\ifthenelse{\equal{#1}{algo}}{\ensuremath{\pi^{\mathrm{algo},N}_{#2|#3}}}
{\ifthenelse{\equal{#1}{tilde}}{\ensuremath{\pi^{\mathrm{FFBSi},N}_{#2|#3}}}
{\ifthenelse{\equal{#1}{tar}}{\ensuremath{\pi^{N,\mathrm{t}}_{#2|#3}}}}
}
}
}
\newcommand{\Post}[3][]%
{
\ifthenelse{\equal{#1}{}}{\ensuremath{\Pi_{#2|#3}}}%
{\ifthenelse{\equal{#1}{algo}}{\ensuremath{\Pi^{\mathrm{algo},N}_{#2|#3}}}
{\ifthenelse{\equal{#1}{tilde}}{\ensuremath{\Pi^{\mathrm{FFBSi},N}_{#2|#3}}}
{\ifthenelse{\equal{#1}{tar}}{\ensuremath{\Pi^{N,\mathrm{t}}_{#2|#3}}}}
}
}
}
\newcommand{\PE}[2][]{\mathbb{E}\left[ #2 \ifthenelse{\equal{#1}{}}{}{\middle|#1}\right]}
\newcommand{\PP}[1]{\mathbb{P}\left[#1\right]}
\newtheorem{prop}{Proposition}
\newtheorem{theorem}{Theorem}
\newcounter{hypA}
\newenvironment{hypA}{\smallskip \refstepcounter{hypA}\begin{itemize}
  \item[({\bf A\arabic{hypA}})]}{\end{itemize} \smallskip }
\newcommand{\A}[1]{(\textbf{A#1})}
\renewenvironment{proof}%
 {\noindent{\textsc{Proof}.} }%
 { \hfill$\square$}               % Not needed if using AMSART style
\title{Particle approximation improvement of the joint smoothing distribution with on-the-fly  variance estimation}
\author[C. Dubarry {\it et al.}]{Cyrille Dubarry}
\address{TELECOM SudParis\\
        D\'epartement CITI\\
        9 rue Charles Fourrier\\
        Evry, France.\\
        cyrille.dubarry@telecom-sudparis.eu}
\email{cyrille.dubarry@telecom-sudparis.eu}
\author[C. Dubarry {\it et al.}]{Randal Douc}
\address{TELECOM SudParis\\
        D\'epartement CITI\\
        9 rue Charles Fourrier\\
        Evry, France.\\
        randal.douc@telecom-sudparis.eu}
\email{randal.douc@telecom-sudparis.eu}
\begin{document}

\begin{abstract}
Particle smoothers are widely used algorithms allowing to approximate the
smoothing distribution in hidden Markov models. Existing algorithms often
suffer from slow computational time or degeneracy. We propose in this paper
a way to improve any of them with a linear complexity in the number of
particles. When iteratively applied to the degenerated Filter-Smoother, this
method leads to an algorithm which turns out to outperform existing linear
particle smoothers for a fixed computational time. Moreover, the associated approximation satisfies a central limit theorem with a close-to-optimal asymptotic variance, which be easily estimated by only one run of the algorithm.
\end{abstract}

\emph{Keywords: }
Degeneracy, Hidden Markov model, Particle smoothing, Sequential Monte-Carlo, Variance estimation

\section{Introduction}
\label{sec:intro}
\footnote{This work is supported by the Agence Nationale de la Recherche (ANR, 212, rue de Bercy 75012 Paris) through the 2009-2012 project Big MC}

A \emph{hidden Markov model} (HMM) is a doubly stochastic process where a Markov chain $\{ X_t \}_{t = 0}^\infty$ is only partially observed through a sequence of observations $\{ Y_t \}_{t = 0}^\infty$. More precisely, let $\Xset$ and $\Yset$ be two spaces equipped with countably generated $\sigma$-fields
$\Xsigma$ and $\Ysigma$, respectively, and denote by $M$ a Markovian transition kernel on $(\Xset, \Xsigma)$ and by $G$ a transition kernel from
$(\Xset, \Xsigma)$ to $(\Yset, \Ysigma)$. In our setting, the dynamics of the bivariate process $\{ (X_k, Y_k) \}_{k = 0}^\infty$ follows the Markovian transition kernel
\begin{equation}
\label{eq:JointChainHMM}
P\left[(x,y), \Aset \right] \eqdef M \otimes G [(x,y), \Aset] = \iint M(x, \rmd x') \, G(x', \rmd y') \one_{\Aset}(x', y') \eqsp,
\end{equation}
where $(x,y) \in \Xset \times \Yset$ and $\Aset \in \Xsigma \otimes \Ysigma$.

We assume that there exist nonnegative $\sigma$-finite measures $\lambda$ on $(\Xset, \Xsigma)$ and $\mu$ on $(\Yset, \Ysigma)$ such that for any $x \in \Xset$, $M(x, \cdot)$ and $G(x, \cdot)$ are dominated by $\lambda$ and $\mu$, respectively. This implies the existence of kernel densities
$$
m(x, x') \eqdef \frac{\rmd M(x, \cdot)}{\rmd \lambda}(x') \quad \mbox{and} \quad g(x, y) \eqdef \frac{\rmd G(x, \cdot)}{\rmd \mu}(y) \eqsp.
$$
In what follows, we simply write $\rmd x$ for $\lambda(\rmd x)$.

We are interested here in estimating the expectation of a function of $(X_{0},\ldots,X_{T})$ conditionally on the observations $Y_0,\ldots, Y_T$ using particle smoothing algorithms. Many different implementations of the particle filters and smoothers have been proposed in
the literature with different computational costs; see for example
\cite{delmoral:2004,cappe:moulines:ryden:2005,doucet:johansen:2009}. So far, the existing particle smoothers rely on the so-called {\em Forward-Filter} whose complexity is linear in the number of particles $N$. In its simplest extension, storing the paths of the Forward-Filter allows to approximate the joint smoothing distribution as seen by \cite{kitagawa:1996}. This method known as the {\em Filter-Smoother} unfortunately suffers from a poor representation of the states corresponding to times $t\ll T$. To circumvent this drawback, the {\em FFBS (Forward Filtering Backward Smoothing)} algorithm introduced by \cite{doucet:godsill:andrieu:2000} adds a backward pass to the forward filter at the cost of a quadratic complexity when used for approximating the marginal smoothing distributions. However, \cite{godsill:doucet:west:2004} extended it to the {\em FFBSi (Forward Filtering Backward Simulation)}, an algorithm which can be
implemented with a $\bigo{N}$ computational cost per time step as proposed by \cite{douc:garivier:moulines:olsson:2010} when approximating the whole joint smoothing distribution. If we are interested only in approximations of the marginal smoothing distributions, the {\em Two-Filter smoother} of \cite{briers:doucet:maskell:2010} may also be used as an alternative method. This algorithm originally suffers from a quadratic computational cost but has recently been modified in \cite{fearnhead:wyncoll:tawn:2010} to get a linear one.

Whereas more and more SMC-based smoothing algorithms are linear in the number of particles,
there is a recent surge of interest in mixed strategies (see \cite{andrieu:doucet:holenstein:2010,olsson:ryden:2010} or \cite{chopin:jacob:papaspiliopoulos:2011}) where nice properties of SMC and MCMC algorithms are conjugated to produce better approximations. Whereas these methods are developed mostly in the framework of Bayesian inference for state space models, we focus here on the quality of the approximation of the smoothing distribution associated to a fixed Hidden Markov model. This is a crucial problem to address and the hope is to exhibit the key factors that affects the quality of the estimation. More precisely, fix (once and for all) a set of observations $Y_0,\dots,Y_T$ and try to approximate the law of $X_0,\dots,X_T$ conditionally on the observations with a set of particles $(\xi_0^{i,N},\dots,\xi_T^{i,N})_{i=1}^N$ associated to equal or unequal weights $(\omega_T^{i,N})_{i=1}^N$. For a fixed CPU time, how to build the best population of particles? Should we use mixed strategies? Can we obtain confidence intervals without additional Monte Carlo passes? These are some of the questions we consider in this work. Since $T$ is fixed, the context of this work does not exactly correspond to the one of \cite{gilks:berzuini:2001b} who propose to sequentially alternate SMC stages and MCMC stages as more and more observations are available. Nevertheless, the MCMC step called the Move stage by these authors is now included in the method proposed in this paper to form an efficient algorithm where some directional update of the components extends sequentially the diversity of the population from high values of $t$ to lower values of $t$.  Despite its simplicity, the resulting algorithm turns out to be more than a strong competitor to existing smoothing samplers.

We propose here to improve any consistent particle approximation of the joint smoothing distribution by moving sequentially the particles according to a Metropolis-within-Gibbs iteration. Such algorithm has a linear computational cost and can be applied in particular to the Filter-Smoother to reduce the degeneracy without increasing the complexity. The paper is organized as follows: in Section~\ref{sec:newAlgo}, we describe the algorithm. In Section~\ref{sec:properties}, we show that the limiting variance of the algorithm is reduced in comparison with the original SMC-based population with a multinomial resampling stage. One major characteristic of this algorithm is the fact that, by letting the number of iterations of the Markov chains proportional to $\ln N$, the asymptotic variance is close to optimal and can be estimated using the evolution of only one population of particle paths. Up to our knowledge, this feature is totally new in the smoothing literature. Numerical experiments and comparisons with existing linear smoothers are provided in Section \ref{sec:experiments} for the Linear Gaussian Model (LGM) and the Stochastic Volatility Model (StoVolM).

\section{MH-Improvement of a particle path population}
\label{sec:newAlgo}
Denote for $u\leq s$, $a_{u:s}=(a_u,a_{u+1},\dots,a_s)$ and define the smoothing distribution $\Post{0:T}{T}$ associated to a fixed set of observations $Y_{0:T}=y_{0:T}$ by: for any  $\Aset \in \Xsigma^{\otimes (T+1)}$,
\begin{equation*}
\Post{0:T}{T}(\Aset)\eqdef  \dfrac{\idotsint   \chi(\rmd x_0) g(x_0,y_0) \left[\prod_{i=1}^T m(x_{i-1},x_i)g(x_i,y_i) \right]\one_\Aset(x_{0:T})\rmd x_{1:T}}{\idotsint \chi(\rmd x_0) g(x_0,y_0) \left[\prod_{i=1}^T m(x_{i-1},x_i)g(x_i,y_i) \right]\rmd x_{1:T} } \eqsp,
\end{equation*}
where $\chi$ is a probability measure on $(\Xset,\Xsigma)$.
%\begin{equation*}
%\label{eq:defEll}
%\ell_{T}(\rmd x_{1:T}) \eqdef \chi(\rmd x_1) g(x_1,y_1) \prod_{t=2}^T m(x_{t-1},x_t)g(x_t,y_t) \rmd x_t \eqsp.
%\end{equation*}
The distribution $\Post{0:T}{T}$ is thus the law of $X_{0:T}$ conditionally to $Y_{0:T}=y_{0:T}$ when $X_0$ follows the distribution $\chi$. In the sequel,  $\chi$ is assumed to have a density w.r.t. $\lambda(\rmd x)$, density which will be denoted by $\chi$ by abuse of notation: $\chi(\rmd x)=\chi(x) \lambda(\rmd x)$. Then, the density $\post{0:T}{T}$ of the distribution $\Post{0:T}{T}$ with respect to $\prod_{t=0}^T \lambda(\rmd x_t)$ writes
\begin{equation}\label{eq:def-pi}
\post{0:T}{T}(x_{0:T}) \propto \chi( x_0) g(x_0,y_0) \left[\prod_{i=1}^T m(x_{i-1},x_i)g(x_i,y_i) \right] \eqsp.
\end{equation}

As noted in \cite{gilks:berzuini:2001b}, the smoothing density $\post{0:T}{T}$ in \eqref{eq:def-pi} is known up to a normalizing constant so that approximation of this distribution can be perfectly cast into the general framework of the Metropolis-Hastings algorithm. Given that the resulting Markov chain evolves in the path space $\Xset^{T+1}$, the candidate at each iteration should be carefully chosen to keep the acceptance rate away from zero which is a delicate task in high dimensional spaces. Considering this, an appealing approach in the MCMC literature is the Gibbs sampler and more generally the Metropolis-within-Gibbs sampler which proposes to update only one component at a time. One could also choose to update components by blocks but as will be seen in Section~\ref{sec:experiments}, moving only one component at a time is sufficient for our purpose. A key point for exploring the posterior distribution within a reasonable number of iterations is that the algorithm should be well initialized at least for the first components to be updated. We propose here to achieve this by exploiting approximation of $\Post{0:T}{T}$ provided by SMC-based algorithms.

 More precisely, suppose that we already have an approximation of $\Post{0:T}{T}$ through a set of (normalized) weighted particle paths, $(\xi_{0:T}^{i,N},\omega_{0:T}^{i,N})_{i=1}^N$ in the sense that
\begin{equation}\label{eq:first-approx}
\Post{0:T}{T}(h) \approx  \sum_{i=1}^N \omega_{0:T}^{i,N} h(\xi_{0:T}^{i,N})\eqsp, \quad \sum_{i=1}^N \omega_{0:T}^{i,N}=1\eqsp,
\end{equation}
 We intend here to improve this approximation by running $N$ independent Metropolis-within-Gibbs Markov chains $(\xi_{0:T}^{i,N}[k], k \geq 0)$ for  $i \in \{1,\dots,N\}$ starting from each path $\xi_{0:T}^{i,N}$, that is, we set  $\xi_{0:T}^{i,N}[0]=\xi_{0:T}^{i,N}$ for $i \in \{1,\dots,N\}$. The resulting approximation after $K$ iterations of the Markov chains then writes
\begin{equation}\label{eq:approx-post-MWG}
\Post{0:T}{T}(h) \approx  \sum_{i=1}^N \omega_{0:T}^{i,N} h(\xi_{0:T}^{i,N}[K])\eqsp.
\end{equation}
Let us now detail the transition of $(\xi_{0:T}^{i,N}[k],\ k\geq 0)$. For a simpler exposition, we drop here the dependence on $i,N$. Now, consider a family of transition kernel densities $(r_t)_{0 \leq t\leq T}$ such that  $r_0,r_T$ are transition kernel densities on $(\Xset,\Xsigma)$ whereas for $t\in\{1,\dots,T-1\}$, $r_t$ is a transition kernel density on $(\Xset \times \Xset,\Xsigma)$. For $u,v,w,x \in \Xset$, set
\begin{eqnarray}
\alpha_0(v,w;x)&\eqdef&\frac{\chi(x)g(x,y_0)m(x,w)}{\chi(v)g(v,y_0)m(v,w)}\frac{r_0(w;v)} {r_0(w;x)}  \wedge 1\eqsp, \label{eq:def-alpha-0}\\
\alpha_t(u,v,w;x)&\eqdef&\frac{m(u,x)g(x,y_t)m(x,w)}{m(u,v)g(v,y_t)m(v,w)}\frac{r_t(u,w;v)} {r_t(u,w;x)}  \wedge 1\eqsp,\quad 1\leq t \leq T-1\eqsp, \label{eq:def-alpha-t}\\
\alpha_T(u,v;x)&\eqdef&\frac{m(u,x)g(x,y_T)}{m(u,v)g(v,y_t)}\frac{r_T(u;v)} {r_T(u;x)}  \wedge 1 \eqsp. \label{eq:def-alpha-T}
\end{eqnarray}
At time $k$, the new path $\xi_{0:T}[k]$ is obtained by updating backward in time each component $\xi_{t}[k]$ as follows
\begin{enumerate}[(i)]
\item Sample a candidate $X\sim r_t(\xi_{t-1}[k-1],\xi_{t+1}[k],\cdot)$,
\item Accept $\xi_t[k]=X$ with probability $\alpha_t(\xi_{t-1:t}[k-1],\xi_{t+1}[k];X)$,
\item Otherwise, set $\xi_t[k]=\xi_t[k-1]$.
\end{enumerate}
This procedure is valid for $t\in \{1,\dots,T-1\}$; we skip the description of the updates for $\xi_{0}[k]$ and $\xi_{T}[k]$ since they follow the same lines under very slight modifications. The complete pseudo-code version of the Metropolis-Hastings Improved Particle Smoother (MH-IPS)  is given below.
\begin{algorithm}[h]
    \begin{algorithmic}[1]
        \caption{\quad MH-IPS}\label{alg:imprMWG}
        \State \underline{\em Initialization}
        \State Run an SMC-algorithm targeting $\Post{0:T}{T}$ and store $(\xi_{0:T}^{i,N},\omega_{0:T}^{i,N})_{i=1}^N$.
        \State Set: $\forall\ 1\leq i \leq N,\ \xi_{0:T}^{i,N}[0]=\xi_{0:T}^{i,N}$.
        \State \underline{\em $K$ improvement passes}
        \For{$k$ from $1$ to $K$}
        \For{$i$ from $1$ to $N$}
            \State Sample $X \sim r_T(\xi_{T-1}^{i,N}[k-1];\cdot)$,
            \State Accept $\xi_{T}^{i,N}[k]=X$ with probability $\alpha_T(\xi_{T-1:T}^{i,N}[k-1],X)$,
            \State Otherwise, set $\xi_{T}^{i,N}[k]=\xi_{T}^{i,N}[k-1]$.
            \For{$t$ from $T-1$ down to $1$}
                \State Sample $X \sim r_t(\xi_{t-1}^{i,N}[k-1],\xi_{t+1}^{i,N}[k];\cdot)$,
                \State Accept $\xi_{t}^{i,N}[k]=X$ with probability $\alpha_t(\xi_{t-1:t}^{i,N}[k-1],\xi_{t+1}^{i,N}[k],X)$,
                \State Otherwise, set $\xi_{t}^{i,N}[k]=\xi_{t}^{i,N}[k-1]$.
            \EndFor
            \State Sample $X \sim r_0(\xi_{1}^{i,N}[k];\cdot)$,
            \State Accept $\xi_{0}^{i,N}[k]=X$ with probability $\alpha_0(\xi_{0}^{i,N}[k-1],\xi_{1}^{i,N}[k],X)$,
            \State Otherwise, set $\xi_{0}^{i,N}[k]=\xi_{0}^{i,N}[k-1]$.
        \EndFor
        \EndFor
    \end{algorithmic}
\end{algorithm}

Straightforwardly, for any $t \in \{0,\dots,T\}$, $\alpha_t$ is the classical Metropolis-Hastings acceptance rate associated to the proposal kernel $r_t$ and the target distribution $\Post{0:T}{T}$. Due to the specific structure of $\Post{0:T}{T}$ whose density is a product of quantities involving consecutive components, the acceptance ratios in \eqref{eq:def-alpha-0}, \eqref{eq:def-alpha-t} and \eqref{eq:def-alpha-T} do not depend on the path space dimension and are therefore nondegenerated. Of course, it is also possible to update each component from an arbitrary number of neighbors. Nevertheless, in the Gibbs Sampler for which all the acceptance rates are equal to one, the $t$-th component is updated according to the distribution of $X_t$ conditionally on $X_{0:t-1},X_{t+1:T},Y_{0:T}$ which only depends on $X_{t-1},X_{t+1},Y_t$. Such dependence suggests that the candidate in the Metropolis-within-Gibbs algorithm should be proposed according to a distribution which only involves its nearest neighbors.

MH-IPS is based on a first approximation of $\Post{0:T}{T}$ given in \eqref{eq:first-approx} whereas some SMC algorithms like the Filter-Smoother are known to suffer from a poor representation of the states close to $0$ but are accurate for states close to $T$. As a consequence, $(\xi_{t}^{i,N})_{i=1}^N$ for large values of $t$ are well-distributed and this set of particles is then propagated to the poorer ones by updating the components backward in time. In other words, instead of a random-scan procedure where components are updated at random, this determistic-scan Metropolis-Hastings algorithm extends the diversity of the particle paths to the lower values of $t$ at each backward pass. The fact that MH-IPS uses the SMC-based approximation just once and then, keep the $N$ Metropolis-within-Gibbs Markov chains independent from each other implies that the path degeneracy vanishes as the number of iterations increases. Strong empirical evidences of this phenomenon are provided in Section~\ref{sec:experiments}.

A last but striking particularity of MH-IPS when compared to classical MH algorithms is the fact that the  approximation \eqref{eq:approx-post-MWG} only involves the states at iteration $K$ of the $N$ Markov chains instead of using all the history of these Markov chains. Indeed, since only one component is updated at a time, the consecutive paths are highly positively correlated so that including them into \eqref{eq:approx-post-MWG} is detrimental to the quality of the approximation. Another advantage of considering only states at iteration $K$ is that the CLT of the approximation \eqref{eq:approx-post-MWG} which is quite easy to establish when $K \propto \ln N$ includes a very simple and close-to-optimal expression of the asymptotic variance. The estimation of this variance can be performed using the evolution of only one population of sample paths. Therefore, on the contrary to all the smoothing algorithms proposed in the literature so far, confidence intervals can be obtained without additional Monte Carlo passes.

\section{Properties of the algorithm}
\label{sec:properties}
In this section, since the number of observations is fixed, $T$ is dropped for simplicity from the notation. For example, we set $\genTarget=\Post{0:T}{T}$, $\genPart{i}{N}=\xi_{0:T|T}^{i,N}$, $\genWeight{i}{N}=\omega_{0:T|T}^{i,N}$ and so on.

The general procedure induced by MH-IPS can be described as follows.
Let $Q$ be a Markov transition kernel on $(\Xset^{T+1},\Xsigma^{\otimes (T+1)})$ with invariant distribution $\genTarget$. Consider a set of normalized weighted particles $(\genPart{i}{N},\genWeight{i}{N})_{i=1}^N$ and move the particles
 {\em independently} according to the kernel $\moveKernel$. To be specific, define $N$ {\em independent} Markov chains $(\genPart[k]{i}{N},k \geq 0)_{i=1}^N$  such that:
\begin{align}
&\genPart[0]{i}{N} = \genPart{i}{N} \eqsp, \label{eq:init-Markov-chains}\\
&\genPart[k+1]{i}{N} \sim \moveKernel(\genPart[k]{i}{N},\cdot)\eqsp,\quad k \geq 0\eqsp. \label{eq:trans-Markov-chains}
\end{align}
According to \eqref{eq:approx-post-MWG}, $\genTarget h$ is approximated after $k$ iterations of the Markov chains by:
\begin{equation}\label{eq:approx-target-Markov-chains}
\genTarget h \approx \sum_{i=1}^N \genWeight{i}{N} h(\genPart{i}{N}[k]), \quad \sum_{i=1}^N \genWeight{i}{N} = 1\eqsp.
\end{equation}

%\begin{equation}\label{eq:def-filtration}
%\mathcal{F}_k^N = \sigma(\genPart[\ell]{i}{N}, \genWeight{i}{N}, 0 \leq \ell \leq k, 1 \leq i \leq N)\eqsp.
%\end{equation}

\subsection{A resampling step in the initialization}
Let us first consider the impact of the weights on the quality of the approximation. A resampling step in the initialization consists in replacing the weighted particles $(\genPart{i}{N},\genWeight{i}{N})_{i=1}^N$ by the unweighted particles  $(\genRePart{i}{N},1/N)_{i=1}^N$ such that some unbiasedness condition is fulfilled. Whereas many resampling strategies have been developed in the literature (\cite{liu:chen:1998}, \cite{kitagawa:1998}, \cite{carpenter:clifford:fearnhead:1999}; see also \cite{douc:cappe:moulines:2005} for a brief review of their different properties), we only focus here on the most simple one, the multinomial resampling:
\begin{enumerate}[(i)]
\item $(\genRePart{j}{N})_{j=1}^N$ are independent conditionally on $(\genPart{i}{N},\genWeight{i}{N})_{i=1}^N$,
\item for all $i,j \in \{1,\dots,N\}$, $\PP{\genRePart{j}{N}=\genPart{i}{N}}=\genWeight{i}{N}$.
\end{enumerate}
A straightforward calculation yields:
$$
\Var{}{\sum_{i=1}^N \genWeight{i}{N} h(\genPart{i}{N})} \leq \Var{}{\sum_{i=1}^N  h(\genRePart{i}{N})/N}\eqsp,
$$
showing that at time $0$, the particle system with equal weights is less efficient than the one with original weights. Despite this, the resampling stage discards particles with small weights and duplicates "informative" particles (with high weights). As in the particle filtering theory, our hope is that the resampling stage increases the number of Markov chains starting from interesting regions with respect to the target distribution.

Denote by  $\norm{\TV}{\cdot}$ the total variation norm: $\norm{\TV}{\mu}\eqdef \sup_{\norminf{f}\leq 1 }|\mu(f)|$ where $\norminf{f}\eqdef \sup_{x \in \Xset} |f(x)|$ and assume that
\begin{hypA} \label{assum:Qk-moins-pi-conv-vers-0}
For any $x \in\Xset^{T+1}$, $\lim_{k \to \infty} \norm{\TV}{\moveKernel^k(x,\cdot)-\genTarget}=0$.
\end{hypA}
 Under this assumption, it is straightforward that for any bounded measurable function $h$, $\sum_{i=1}^N \genWeight{i}{N} h(\genPart{i}{N}[k])$ is asymptotically unbiased whatever the weights are, provided their sum is equal to one. To go further, consider the effect of the weights on the second order approximation. The following proposition shows that as the iterations of the Markov chains goes to infinity, the quadratic error tends to a limit which is minimal when all the weights are equal to $1/N$. This advocates for a particle system with equal weights in the initialization as provided by a resampling step before letting evolve the $N$ Markov chains.
\begin{prop} \label{prop:bestWeights}
Assume \A{\ref{assum:Qk-moins-pi-conv-vers-0}}. Then, for any bounded measurable function $h$,
$$
\lim_{k \to \infty} \mathbb{E}\left[ \left( \sum_{i=1}^N \genWeight{i}{N} h(\genPart[k]{i}{N})-\genTarget h \right)^2\right]=\Var{\genTarget}{h}\ \PE{\sum_{i=1}^N \left(\genWeight{i}{N}\right)^2}
 $$
 where $\Var{\genTarget}{h}=\genTarget h^2-\left(\genTarget h\right)^2$. Moreover,  the previous limit is minimized when all the weights are equal:  $\genWeight{i}{N}=1/N$ for all $i \in\{1,\dots,N\}$.
\end{prop}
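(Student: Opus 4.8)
The plan is to condition on the $\sigma$-field $\fcal_0$ generated by the output $(\genPart{i}{N},\genWeight{i}{N})_{i=1}^N$ of the initializing SMC algorithm and to exploit that, given $\fcal_0$, the $N$ Markov chains $(\genPart[k]{i}{N},k\ge0)$ evolve independently, chain $i$ being started deterministically at $\genPart{i}{N}$. Since $\sum_i\genWeight{i}{N}=1$, the error equals $\sum_{i=1}^N\genWeight{i}{N}\bigl(h(\genPart[k]{i}{N})-\genTarget h\bigr)$; expanding the square and using, for $i\ne j$, conditional independence of the chains together with the Markov property, one obtains
\[
\PE[\fcal_0]{\Bigl(\sum_{i=1}^N\genWeight{i}{N}h(\genPart[k]{i}{N})-\genTarget h\Bigr)^2}=\sum_{i=1}^N(\genWeight{i}{N})^2\,A_i^k+\sum_{i\ne j}\genWeight{i}{N}\genWeight{j}{N}\,B_i^kB_j^k,
\]
where $B_i^k\eqdef\moveKernel^kh(\genPart{i}{N})-\genTarget h$ and $A_i^k\eqdef\PE[\fcal_0]{(h(\genPart[k]{i}{N})-\genTarget h)^2}=\moveKernel^k(h^2)(\genPart{i}{N})-2\,\genTarget h\,\moveKernel^kh(\genPart{i}{N})+(\genTarget h)^2$.

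Next I would invoke Assumption \A{\ref{assum:Qk-moins-pi-conv-vers-0}}: for any bounded measurable $f$ and any $x\in\Xset^{T+1}$ one has $|\moveKernel^kf(x)-\genTarget f|\le\norminf{f}\,\norm{\TV}{\moveKernel^k(x,\cdot)-\genTarget}\to0$ as $k\to\infty$. Applying this with $f=h$ and $f=h^2$ shows that, for every realization, $B_i^k\to0$ and $A_i^k\to\genTarget(h^2)-2(\genTarget h)^2+(\genTarget h)^2=\Var{\genTarget}{h}$. Hence the conditional second moment displayed above converges almost surely to $\Var{\genTarget}{h}\sum_{i=1}^N(\genWeight{i}{N})^2$.

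It then remains to pass to the unconditional expectation. Because Markov kernels contract bounded functions, $|B_i^k|\le2\norminf{h}$ and $0\le A_i^k\le4\norminf{h}^2$ uniformly in $i$ and $k$, while $\sum_i(\genWeight{i}{N})^2\le1$ and $\sum_{i\ne j}\genWeight{i}{N}\genWeight{j}{N}\le1$; thus the conditional second moment is bounded by $8\norminf{h}^2$ uniformly in $k$, and dominated convergence gives $\lim_k\mathbb{E}\bigl[(\sum_i\genWeight{i}{N}h(\genPart[k]{i}{N})-\genTarget h)^2\bigr]=\Var{\genTarget}{h}\,\PE{\sum_{i=1}^N(\genWeight{i}{N})^2}$. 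Finally, by Cauchy--Schwarz (or Jensen) $\sum_i(\genWeight{i}{N})^2\ge N^{-1}\bigl(\sum_i\genWeight{i}{N}\bigr)^2=N^{-1}$ almost surely, with equality iff $\genWeight{i}{N}\equiv1/N$; taking expectations yields $\PE{\sum_i(\genWeight{i}{N})^2}\ge N^{-1}$, the bound being attained for the uniform weights, which proves the minimality claim.

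The computations here are elementary; the one point to state with care is the domination used to exchange limit and expectation — it is immediate since $h$ is bounded and $\moveKernel$ contracts $L^\infty$ — together with the bookkeeping of which randomness is conditioned upon, namely that the $N$ chains are driven by fresh randomness independent across $i$ and of $\fcal_0$, so that the $i\ne j$ cross-terms factorize. I do not anticipate any genuine obstacle beyond this.
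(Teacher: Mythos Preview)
Your proof is correct and follows essentially the same route as the paper: condition on $\fcal_0^N$, exploit the conditional independence of the $N$ chains, use \A{\ref{assum:Qk-moins-pi-conv-vers-0}} for pointwise convergence and dominated convergence to pass the limit inside the expectation, then Cauchy--Schwarz for the minimality. The only cosmetic difference is that the paper organizes the computation via the bias--variance decomposition and the law of total variance (three terms), whereas you expand the conditional second moment directly into diagonal and off-diagonal sums; algebraically these are the same thing.
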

\begin{proof}
Proof is given the Appendix.
\end{proof}

As a consequence of this proposition, it is assumed in the sequel that the {\em multinomial resampling stage} has been performed in the initialization, i.e. \eqref{eq:init-Markov-chains}, \eqref{eq:trans-Markov-chains} and \eqref{eq:approx-target-Markov-chains} are replaced by
\begin{align}
&\genRePart[0]{i}{N} = \genRePart{i}{N} \eqsp, \label{eq:init-Markov-chains-Re}\\
&\genRePart[k+1]{i}{N} \sim \moveKernel(\genRePart[k]{i}{N},\cdot)\eqsp,\quad k \geq 0\eqsp, \label{eq:trans-Markov-chains-Re}\\
&\genTarget h \approx \sum_{i=1}^N  h(\genRePart{i}{N}[k])/N\eqsp, \label{eq:approx-target-Markov-chains-Re}
\end{align}
Then, according to Proposition~\ref{prop:bestWeights},
\begin{equation}\label{eq:lim-quadratic-error-resampling}
\lim_{k \to \infty} \mathbb{E}\left[ \left( \sum_{i=1}^N h(\genRePart[k]{i}{N})/N-\genTarget h \right)^2\right]=\Var{\genTarget}{h}/N\eqsp.
\end{equation}
Thus, when $N$ is fixed and $k$ goes to infinity, \eqref{eq:lim-quadratic-error-resampling} shows that the approximation cannot be better than having $N$ independent draws from the distribution $\genTarget$. A natural question is now to properly tune the number of iterations $k$ of the Markov chains to the number $N$ of initial points so that the unweighted particles $(\genPart{i}{N}[k],1/N)_{i=1}^N$ have properties close to iid draws according to $\genTarget$ {\em without} letting $k$ go to infinity.
Before treating this question, let us examine some non-asymptotic result with respect to the approximation.
\subsection{Deviation Inequality}
 Noting that  $(\genRePart{i}{N}[k])_{i=1}^N$ are i.i.d conditionally to $\tfcal_0^N\eqdef \sigma\left\{\genRePart{i}{N}, i \in \{1,\dots,N\}\right\}$ and that $\PE{h(\genRePart{i}{N}[k])|\tfcal_0^N}=\moveKernel^k h(\genRePart{i}{N})$, the conditional Hoeffding inequality directly yields:
\begin{prop} \label{prop:deviation}
For any bounded measurable function $h$, any $k\in \nset$ and any $\epsilon>0$,
\begin{multline}
\PP{\left|\sum_{i=1}^N h(\genRePart[k]{i}{N})/N-\genTarget h\right|>\epsilon} \leq 2 \exp\left(- \frac{N \epsilon^2}{2 \left(\osc{h}\right)^2}\right)\\
+\PP{\left|\sum_{i=1}^N \moveKernel^k h(\genRePart{i}{N})/N-\genTarget h\right|>\epsilon/2}\eqsp,  \label{eq:deviation}
\end{multline}
where $\osc{h}=\sup_{u,v \in \Xset} |h(u) -h(v)|$.
\end{prop}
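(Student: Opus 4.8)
The plan is to center the normalized empirical average around its conditional mean given the resampled initial particle system, and then apply a conditional version of Hoeffding's inequality, exactly as announced just before the statement. First I would add and subtract $\sum_{i=1}^N \moveKernel^k h(\genRePart{i}{N})/N$, writing
\begin{equation*}
\sum_{i=1}^N h(\genRePart[k]{i}{N})/N-\genTarget h
= \sum_{i=1}^N \frac{h(\genRePart[k]{i}{N})-\moveKernel^k h(\genRePart{i}{N})}{N}
+ \left(\sum_{i=1}^N \moveKernel^k h(\genRePart{i}{N})/N-\genTarget h\right)\eqsp.
\end{equation*}
By the triangle inequality and a union bound, the left-hand side of \eqref{eq:deviation} is then at most
\begin{equation*}
\PP{\left|\sum_{i=1}^N \frac{h(\genRePart[k]{i}{N})-\moveKernel^k h(\genRePart{i}{N})}{N}\right|>\epsilon/2}
+\PP{\left|\sum_{i=1}^N \moveKernel^k h(\genRePart{i}{N})/N-\genTarget h\right|>\epsilon/2}\eqsp,
\end{equation*}
and since the second term already appears on the right-hand side of \eqref{eq:deviation}, only the first one needs to be bounded.

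For that term I would condition on $\tfcal_0^N$. By the construction \eqref{eq:init-Markov-chains-Re}--\eqref{eq:trans-Markov-chains-Re}, the $N$ Markov chains evolve independently once $\tfcal_0^N$ is fixed, so the variables $(h(\genRePart[k]{i}{N})-\moveKernel^k h(\genRePart{i}{N}))/N$, $1\leq i\leq N$, are independent conditionally on $\tfcal_0^N$; they are centered because $\PE[\tfcal_0^N]{h(\genRePart[k]{i}{N})}=\moveKernel^k h(\genRePart{i}{N})$ by the Markov property; and each of them takes its values in an interval of length $\osc{h}/N$. Hoeffding's inequality, applied conditionally on $\tfcal_0^N$, therefore gives $\mathbb{P}$-almost surely
\begin{equation*}
\mathbb{P}\left[\left|\sum_{i=1}^N \frac{h(\genRePart[k]{i}{N})-\moveKernel^k h(\genRePart{i}{N})}{N}\right|>\frac{\epsilon}{2}\,\middle|\,\tfcal_0^N\right]
\leq 2\exp\left(-\frac{2(\epsilon/2)^2}{N(\osc{h}/N)^2}\right)
= 2\exp\left(-\frac{N\epsilon^2}{2(\osc{h})^2}\right)\eqsp.
\end{equation*}
Taking expectations, the right-hand side being deterministic, produces the first term of \eqref{eq:deviation} and completes the argument.

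I do not anticipate any real obstacle: once the decomposition is in place, the rest is bookkeeping. The two points that deserve a little care are (i) that the constant in Hoeffding's bound works out precisely --- the factor $2$ in its exponent together with the split at level $\epsilon/2$ conspire to give $N\epsilon^2/(2(\osc{h})^2)$ rather than a weaker rate --- and (ii) that the argument uses not the (false) mutual independence of the $\genRePart[k]{i}{N}$ but only their \emph{conditional} independence given $\tfcal_0^N$, which is exactly what running the $N$ Metropolis-within-Gibbs chains independently provides; note also that, unlike the asymptotic results later, this deviation bound requires no ergodicity assumption on $\moveKernel$.
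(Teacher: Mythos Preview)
Your proposal is correct and matches the paper's approach exactly: the paper does not write out a proof but only states, just before the proposition, that the result follows directly from the conditional Hoeffding inequality after noting that the $(\genRePart[k]{i}{N})_{i=1}^N$ are conditionally independent given $\tfcal_0^N$ with $\PE[\tfcal_0^N]{h(\genRePart[k]{i}{N})}=\moveKernel^k h(\genRePart{i}{N})$. Your write-up is precisely the careful expansion of that sentence, with the decomposition, union bound at level $\epsilon/2$, and the Hoeffding constant checked; your remark that only conditional independence (not mutual independence) is used is also the right emphasis.
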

Nevertheless, when reading the inequality in Proposition \ref{prop:deviation}, the question of knowing whether MH-IPS improves or does not improve the approximation is far from being obvious. We now answer this question in terms of the Central Limit Theorem.
\subsection{Central limit theorem}
MH-IPS is based on a first approximation of $\genTarget h$ by a family of normalized weighted particles $(\genPart{i}{N},\genWeight{i}{N})_{i=1}^N$. For various versions of SMC methods, the asymptotic normality of $(\genPart{i}{N},\genWeight{i}{N})_{i=1}^N$ have already been obtained under different techniques (see for example \cite{delmoral:guionnet:1999}, \cite{kunsch:2000}, \cite{chopin:2004}  or \cite{douc:moulines:2008}). The following proposition now focus on the effect of the multinomial resampling on the central limit theorem: whatever SMC method is chosen, if  $(\genPart{i}{N},\genWeight{i}{N})_{i=1}^N$ are asymptotically normal, then $(\genRePart{i}{N},1/N)_{i=1}^N$ are also asymptotically normal with $\Var{\genTarget}{h}$ as an {\em additional} term in the variance.
\begin{prop} \label{prop:clt-resampling}
Assume that $(\genPart{i}{N},\genWeight{i}{N})_{i=1}^N$ are asymptotically normal, in the sense that for any bounded measurable function $h$, there exists $0<\sigma^2(h)<\infty$ such that
$$
N^{1/2} \left[\sum_{i=1}^N \genWeight{i}{N} h(\genPart{i}{N}) - \genTarget h \right] \dlim \mathcal{N}(0,\sigma^2(h))\eqsp.
$$
Then, for any bounded measurable function $h$,
$$
N^{1/2} \left[\sum_{i=1}^N h(\genRePart{i}{N})/N - \genTarget h \right] \dlim \mathcal{N}(0,\Var{\genTarget}{h}+\sigma^2(h)) \eqsp.
$$
\end{prop}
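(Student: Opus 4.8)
The plan is to isolate the extra randomness introduced by the resampling step by conditioning on the $\sigma$-field $\fcal^N\eqdef\sigma\{(\genPart{i}{N},\genWeight{i}{N}),\,1\le i\le N\}$ generated by the weighted particles, and to work throughout with characteristic functions. Writing $\widehat{\genTarget}^N h\eqdef\sum_{i=1}^N\genWeight{i}{N}h(\genPart{i}{N})$, I would decompose
\[
N^{1/2}\Big[\tfrac1N{\textstyle\sum_{i=1}^N} h(\genRePart{i}{N})-\genTarget h\Big]
=N^{1/2}\Big[\tfrac1N{\textstyle\sum_{i=1}^N} h(\genRePart{i}{N})-\widehat{\genTarget}^N h\Big]
+N^{1/2}\big[\widehat{\genTarget}^N h-\genTarget h\big]\eqsp.
\]
The second term is $\fcal^N$-measurable and, by hypothesis, converges in distribution to $\mathcal N(0,\sigma^2(h))$. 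The first term is an average of the variables $h(\genRePart{i}{N})$, which are i.i.d.\ conditionally on $\fcal^N$ with conditional mean $\widehat{\genTarget}^N h$ and conditional variance $v^N\eqdef\sum_{i=1}^N\genWeight{i}{N}h^2(\genPart{i}{N})-\big(\widehat{\genTarget}^N h\big)^2$.

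Fix $u\in\rset$ and set $\phi_N(u)\eqdef\PE{\exp\!\big(\rmi uN^{1/2}[\tfrac1N{\textstyle\sum_i} h(\genRePart{i}{N})-\widehat{\genTarget}^N h]\big)\,\big|\,\fcal^N}=\psi_N(uN^{-1/2})^N$, where $\psi_N(t)\eqdef\PE{\exp(\rmi t[h(\genRePart{1}{N})-\widehat{\genTarget}^N h])\,\big|\,\fcal^N}$. Since $h(\genRePart{1}{N})$ takes its values among the $h(\genPart{i}{N})$ while $\widehat{\genTarget}^N h$ is a convex combination of them, $|h(\genRePart{1}{N})-\widehat{\genTarget}^N h|\le\osc{h}$ a.s.; a second-order Taylor expansion of $\psi_N$, whose remainder is controlled by this uniform bound, gives $|\psi_N(t)-(1-v^N t^2/2)|\le\left(\osc{h}\right)^3|t|^3/6$, and since also $0\le v^N\le\left(\osc{h}\right)^2$ this forces $\phi_N(u)\to\e^{-v^N u^2/2}$ a.s. Applying the assumed CLT to the bounded measurable functions $h$ and $h^2$ and dividing by $N^{1/2}$ yields $\widehat{\genTarget}^N h\plim\genTarget h$ and $\widehat{\genTarget}^N h^2\plim\genTarget h^2$, hence $v^N\plim\genTarget h^2-(\genTarget h)^2=\Var{\genTarget}{h}$; by the continuous mapping theorem $\phi_N(u)\plim c\eqdef\e^{-\Var{\genTarget}{h}\,u^2/2}$.

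To conclude, since $\e^{\rmi uN^{1/2}[\widehat{\genTarget}^N h-\genTarget h]}$ is $\fcal^N$-measurable with modulus one, conditioning on $\fcal^N$ in the decomposition gives
\[
\PE{\exp\!\Big(\rmi uN^{1/2}\big[\tfrac1N{\textstyle\sum_i}h(\genRePart{i}{N})-\genTarget h\big]\Big)}
=\PE{\e^{\rmi uN^{1/2}[\widehat{\genTarget}^N h-\genTarget h]}(\phi_N(u)-c)}
+c\,\PE{\e^{\rmi uN^{1/2}[\widehat{\genTarget}^N h-\genTarget h]}}\eqsp.
\]
The first summand is bounded in modulus by $\PE{|\phi_N(u)-c|}$, which tends to $0$ by bounded convergence ($|\phi_N(u)-c|\le2$ and $\phi_N(u)\plim c$), and the second converges to $c\,\e^{-\sigma^2(h)u^2/2}=\e^{-(\Var{\genTarget}{h}+\sigma^2(h))u^2/2}$ by the assumed CLT for $\widehat{\genTarget}^N h$. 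L\'evy's continuity theorem then gives the claimed convergence. The only genuinely delicate point is the conditional CLT for the first term, whose limiting variance $v^N$ is itself \emph{random}; boundedness of $h$ is exactly what makes the Taylor remainder in $\psi_N$ uniform and what forces $v^N\plim\Var{\genTarget}{h}$, so this step goes through cleanly once the bookkeeping above is set up.
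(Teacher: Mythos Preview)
Your argument is correct and is precisely the conditional characteristic-function route of Chopin (2004, Theorem~1), which the paper cites in lieu of a proof (the paper omits the argument entirely). The only cosmetic point is that ``$\phi_N(u)\to\e^{-v^N u^2/2}$ a.s.'' should be read as $|\phi_N(u)-\e^{-v^Nu^2/2}|\to0$, which is what your Taylor estimate actually delivers (uniformly in the randomness, since $0\le v^N\le\osc{h}^2$); this does not affect the subsequent combination step.
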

The proof follows closely the lines of \cite[Theorem 1]{chopin:2004} or \cite[Theorem 4]{douc:moulines:2008} and is omitted for the sake of brevity.

Proposition~\ref{prop:clt-resampling} shows the asymptotic normality of $(\genRePart{i}{N}[k],1/N)_{i=1}^N$ for $k=0$. The Markov chains are then run independently according to the transition kernel $Q$ and we now consider the impact on the approximation given in \eqref{eq:approx-target-Markov-chains-Re} for $k=k_N$. To be specific, the following theorem shows that under the assumption that the kernel $Q$ is $V$-geometrically ergodic, for $k_N\propto\ln N$, the unweighted particles $(\genRePart{i}{N}[k_N],1/N)_{i=1}^N$ are asymptotically normal with a reduced asymptotic variance. Define the following set of assumptions:
\begin{hypA} \label{assum:clt}
There exists a measurable function $V: \Xset^{T+1} \to [1,\infty)$ such that
\begin{enumerate}[(i)]
\item \label{item:first} $\genTarget V<\infty$ and for any $x \in  \Xset$ and any $k\in\nset$,  $\moveKernel^kV(x)<\infty\eqsp,$
\item \label{item:second} there exists $\beta \in (0,1)$ such that for any $h \in \calC_V\eqdef\{h; \norminf{h/V}<\infty\}$ and any $x \in  \Xset$,
$$
|\moveKernel^kh(x)-\genTarget h| \leq \beta^k V(x)\eqsp,
$$
\item \label{item:third}the sequence $\{N^{-1}\sum_{i=1}^N V^2(\genRePart{i}{N})\}_{N \geq 1}$ of random variables  is bounded in probability.
\end{enumerate}
\end{hypA}
\A{\ref{assum:clt}}-\eqref{item:first} ensures that the quantities appearing in \A{\ref{assum:clt}}-{\eqref{item:second}} are well defined.  \A{\ref{assum:clt}}-{\eqref{item:second}} shows that $Q$ is $V$-geometrically ergodic. \A{\ref{assum:clt}}-{\eqref{item:third}} is a weak assumption concerning the initial unweighted particles $(\genRePart{i}{N},1/N)_{i=1}^N$. If for example, $(\genRePart{i}{N},1/N)_{i=1}^N$ is consistent with respect to the function $V^2$ in the sense that $\sum_{i=1}^N V^2(\genRePart{i}{N})/N$ converges in probability to $\genTarget{V^2}$, then \A{\ref{assum:clt}}-{\eqref{item:third}} holds. Condition under which such convergence results hold for possibly unbounded functions may be found for example in \cite{douc:moulines:2008}.
\begin{theorem} \label{thm:clt}
Assume \A{\ref{assum:clt}}. Let $(k_N)_{N \geq 0}$ be a sequence of integers such that
\begin{equation}\label{eq:lim-k-n}
\lim_{N \to \infty} k_N+\ln N/(2\ln \beta)=\infty\eqsp.
\end{equation}
 Then, for any $h$ such that $h^2 \in \calC_V$, the following central limit theorem holds:
\begin{equation*}
 N^{-1/2} \sum_{i=1}^N \left[h(\genRePart[k_N]{i}{N}) - \genTarget h \right] \dlim \mathcal{N}(0,\Var{\genTarget}{h})\eqsp.
\end{equation*}
\end{theorem}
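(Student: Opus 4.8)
\emph{Proof plan.} The plan is to condition on the $\sigma$-field $\tfcal_0^N\eqdef\sigma\{\genRePart{i}{N},\ i\in\{1,\dots,N\}\}$ carrying the resampled starting points, and to use that, conditionally on $\tfcal_0^N$, the $N$ Markov chains evolve independently, so that the variables $h(\genRePart[k_N]{i}{N})$, $i=1,\dots,N$, are conditionally independent with $\PE[\tfcal_0^N]{h(\genRePart[k_N]{i}{N})}=\moveKernel^{k_N}h(\genRePart{i}{N})$. Setting $W_{N,i}\eqdef h(\genRePart[k_N]{i}{N})-\moveKernel^{k_N}h(\genRePart{i}{N})$, decompose $N^{-1/2}\sum_{i=1}^N[h(\genRePart[k_N]{i}{N})-\genTarget h]=M_N+B_N$ with $M_N\eqdef N^{-1/2}\sum_{i=1}^N W_{N,i}$ and $B_N\eqdef N^{-1/2}\sum_{i=1}^N[\moveKernel^{k_N}h(\genRePart{i}{N})-\genTarget h]$. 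Since $V\geq1$, $h^2\in\calC_V$ forces $h\in\calC_V$, and the $V$-geometric ergodicity \A{\ref{assum:clt}}-\eqref{item:second} gives $|\moveKernel^{k_N}h(x)-\genTarget h|\leq c\,\beta^{k_N}V(x)$ for a finite constant $c$; together with the Cauchy--Schwarz inequality and \A{\ref{assum:clt}}-\eqref{item:third} this yields $|B_N|\leq c\,\beta^{k_N}N^{1/2}\,\big(N^{-1}\sum_{i=1}^N V^2(\genRePart{i}{N})\big)^{1/2}$. As \eqref{eq:lim-k-n} is exactly equivalent to $\beta^{k_N}N^{1/2}\to0$ (and in particular forces $k_N\to\infty$), we get $B_N\plim0$, so by Slutsky's lemma it remains to prove $M_N\dlim\mathcal{N}(0,\Var{\genTarget}{h})$.

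For the conditional variance, conditional independence gives $\var[\tfcal_0^N]{M_N}=N^{-1}\sum_{i=1}^N\big[\moveKernel^{k_N}(h^2)(\genRePart{i}{N})-(\moveKernel^{k_N}h(\genRePart{i}{N}))^2\big]$. Applying \A{\ref{assum:clt}}-\eqref{item:second} to $h^2\in\calC_V$ together with \A{\ref{assum:clt}}-\eqref{item:third} gives $N^{-1}\sum_{i=1}^N\moveKernel^{k_N}(h^2)(\genRePart{i}{N})\plim\genTarget h^2$; writing $\moveKernel^{k_N}h=\genTarget h+e_N$ with $|e_N|\leq c\beta^{k_N}V$, expanding the square, and bounding $N^{-1}\sum_i e_N(\genRePart{i}{N})$ and $N^{-1}\sum_i e_N^2(\genRePart{i}{N})$ with \A{\ref{assum:clt}}-\eqref{item:third} (the squared term being exactly where the $V^2$-bound is needed) gives $N^{-1}\sum_{i=1}^N(\moveKernel^{k_N}h(\genRePart{i}{N}))^2\plim(\genTarget h)^2$. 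Hence $\var[\tfcal_0^N]{M_N}\plim\genTarget h^2-(\genTarget h)^2=\Var{\genTarget}{h}$.

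The main obstacle is the conditional Lindeberg condition $N^{-1}\sum_{i=1}^N\PE[\tfcal_0^N]{W_{N,i}^2\one\{|W_{N,i}|>\varepsilon N^{1/2}\}}\plim0$ for every $\varepsilon>0$, which, since only a second moment of $h$ is available, requires a truncation. Fix $M>0$ and split $h=h^{(M)}+\bar h^{(M)}$ with $h^{(M)}\eqdef((-M)\vee h)\wedge M$; the corresponding increments $W_{N,i}^{(M)}$ and $\bar W_{N,i}^{(M)}$ satisfy $|W_{N,i}^{(M)}|\leq2M$, so for $N$ large $\one\{|W_{N,i}|>\varepsilon N^{1/2}\}\leq\one\{|\bar W_{N,i}^{(M)}|>\varepsilon N^{1/2}/2\}$. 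Using $W_{N,i}^2\leq2(W_{N,i}^{(M)})^2+2(\bar W_{N,i}^{(M)})^2$ and Markov's inequality, the Lindeberg sum is then bounded, for $N$ large, by $O_{\mathbb P}(M^2/N)$ plus $2N^{-1}\sum_{i=1}^N\PE[\tfcal_0^N]{(\bar W_{N,i}^{(M)})^2}\leq2N^{-1}\sum_{i=1}^N\moveKernel^{k_N}\big((\bar h^{(M)})^2\big)(\genRePart{i}{N})$, the last step by conditional Jensen. Since $(\bar h^{(M)})^2\leq h^2\in\calC_V$, \A{\ref{assum:clt}}-\eqref{item:second} and \A{\ref{assum:clt}}-\eqref{item:third} show this last quantity equals $\genTarget[(|h|-M)_+^2]+o_{\mathbb P}(1)$; and $\genTarget h^2\leq\norminf{h^2/V}\,\genTarget V<\infty$ by \A{\ref{assum:clt}}-\eqref{item:first}, so dominated convergence gives $\genTarget[(|h|-M)_+^2]\to0$ as $M\to\infty$. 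Letting $N\to\infty$ first and then $M\to\infty$ establishes the conditional Lindeberg condition.

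With the conditional variance converging in probability to $\Var{\genTarget}{h}$ and the conditional Lindeberg condition verified, a conditional version of the Lindeberg--Feller central limit theorem applies to the row-wise conditionally independent array $\{N^{-1/2}W_{N,i}\}_{i=1}^N$; equivalently, $\PE[\tfcal_0^N]{\rme^{\rmi t M_N}}\plim\rme^{-t^2\Var{\genTarget}{h}/2}$ for each $t\in\rset$, so taking expectations and invoking bounded convergence yields $\PE{\rme^{\rmi t M_N}}\to\rme^{-t^2\Var{\genTarget}{h}/2}$, i.e.\ $M_N\dlim\mathcal{N}(0,\Var{\genTarget}{h})$. Combined with $B_N\plim0$, this proves the stated central limit theorem.
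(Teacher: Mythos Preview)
Your proof is correct and follows the same decomposition as the paper: write the centered sum as a bias term $B_N$ plus a martingale-increment term $M_N$, kill $B_N$ via \A{\ref{assum:clt}}-\eqref{item:second} and the relation $\beta^{k_N}N^{1/2}=\beta^{\gamma_N}\to0$, and then establish a CLT for $M_N$ by checking a conditional variance and a Lindeberg condition. Two differences in execution are worth noting.

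First, for the Lindeberg condition the paper takes a shorter route than your $M$-truncation. Since $h^2\one_{\{h^2\geq\varepsilon^2 N\}}\in\calC_V$ with $\|h^2\one_{\{h^2\geq\varepsilon^2 N\}}/V\|_\infty\leq\|h^2/V\|_\infty$ uniformly in $N$, \A{\ref{assum:clt}}-\eqref{item:second} applies directly and yields
\[
N^{-1}\sum_{i=1}^N \moveKernel^{k_N}\!\bigl[h^2\one_{\{h^2\geq\varepsilon^2 N\}}\bigr](\genRePart{i}{N})
\leq \genTarget\bigl[h^2\one_{\{h^2\geq\varepsilon^2 N\}}\bigr]+\beta^{k_N}\,N^{-1}\sum_{i=1}^N V(\genRePart{i}{N})\eqsp,
\]
after which $\genTarget h^2<\infty$ and \A{\ref{assum:clt}}-\eqref{item:third} finish the job in one line. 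Your truncation argument is valid but more laborious; it would be the natural choice if the $\calC_V$-bound were not uniform over the family of truncations, which here it is.

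Second, the paper casts $M_N$ as a sum of martingale increments for the filtration $\mathcal{F}_{N,i}=\sigma\{\genRePart{\ell}{N},\genRePart[k_N]{j}{N}:\ell\leq N,\ j\leq i\}$ and invokes the martingale-array CLT of \cite{douc:moulines:2008}, whereas you exploit conditional independence given $\tfcal_0^N$ and a conditional characteristic-function argument. Both are correct; your route is arguably the more elementary one in this conditionally i.i.d.\ setting, while the paper's formulation plugs directly into an off-the-shelf theorem.
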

\begin{proof}
Proof is given in the Appendix.
\end{proof}

Theorem~\ref{thm:clt} and Proposition~\ref{prop:clt-resampling} show that $k_N$ iterations of the Markov chains reduce the asymptotic variance when compared to a sample obtained by multinomial resampling of a population issued from any SMC method. The asymptotic variance  $\Var{\genTarget}{h}$ in Theorem~\ref{thm:clt} is close to optimal since it is the same as  for i.i.d. draws with distribution $\genTarget$. Moreover, the expression of $\sigma^2(h)$ in Proposition~\ref{prop:clt-resampling} is usually quite involved and for obtaining confidence intervals, the estimation of the asymptotic variance in Proposition~\ref{prop:clt-resampling} is classically obtained by adding some Monte Carlo passes. This is not at all the case in Theorem~\ref{thm:clt} since estimation of $\Var{\genTarget}{h}$ can be performed directly via $(\genRePart[k_N]{i}{N},1/N)_{i=1}^N$. Finally, by adding typically $k_N=-\ln N /\ln \beta$ iterations of a transition kernel to a SMC-based population of particles, we obtain a sample with a reduced and close-to-optimal variance which can be easily approximated without additional simulations.

The fact that the CLT holds for $k_N \propto \ln N$ suggests that a good approximation of the target distribution may be achieved with only a few number of iterations of the parallel Markov chains. This will be confirmed empirically in the next section.
\section{Experiments}
\label{sec:experiments}
The {\em Filter-smoother} is known to be quite easy to implement and efficient in terms of CPU time, but suffers dramatically from the degeneracy of the ancestors. We now see how only a few iterations of MH-IPS reduce the degeneracy and turn the {\em Filter-smoother} to a strong competitor to the existing smoother algorithms. In the sequel, denote by the \emph{ Metropolis-Hastings Improved Filter-Smoother} (MH-IFS), Algorithm~\ref{alg:imprMWG} initialized with the Filter-Smoother. The performance of this algorithm is now compared to the other linear-in-$N$ particle smoothers (Filter-Smoother, FFBSi, Two-Filter). In order to be as computationally fair as possible, all these algorithms are implemented in the same way as
their common base, the Forward-Filter.

\subsection{Linear Gaussian Model}
\label{subsec:LGM}
We first consider the LGM defined by:
\begin{equation*}
  X_{t+1} = \phi X_t + \sigma_uU_t\eqsp, \quad Y_t = X_t +  \sigma_vV_t\eqsp,
\end{equation*}
where $X_0\sim\mathcal{N}\left(0,\frac{\sigma_u^2}{1-\phi^2}\right)$,
$\left\{U_t\right\}_{t\geq 1}$ and $\left\{V_t\right\}_{t\geq 1}$ are independent sequences of i.i.d.
standard gaussian random variables (independent of $X_1$).
%The
%parameters $\left\{\phi,\sigma_u,\sigma_v\right\}$ are assumed to be known and positive.
$T+1=101$ observations
were generated using the model with $\phi = 0.9$, $\sigma_u
  = 0.6$ and $\sigma_v = 1$. Furthermore, in this model, the fully-adapted filters are explicitly computable when needed and the Gibbs sampler may be implemented.

The diversity of the particle population at each time step for each algorithm is measured by an estimate of the {\em effective sample size} $N_{\mathrm{eff}}^{\mathrm{algo}}(t)$ as defined in \cite{fearnhead:wyncoll:tawn:2010}. Motivated by the fact that
$\mathbb{E}\left[ \left(\bar X_N-\mu\right)^2/\sigma^2  \right] = 1/N,$
when $X^{(1)},\dots,X^{(N)}$ are \iid~with $\mathbb{E}[X^{(1)}]=\mu$, $\mathbb{V}\mathrm{ar}(X^{(1)}) = \sigma^2$ and $\bar X_N$ is their sample mean, we set
\begin{equation}
\label{eq:Neff}
N_{\mathrm{eff}}^{\mathrm{algo}}(t) \eqdef \mathbb{E}\left[ \left(\dfrac{\post[algo]{t}{T}(\mathrm{Id})-\mu_t}{\sigma_t}\right)^2\right]^{-1}\eqsp,
\end{equation}
where $\mathrm{Id}$ is the identity function on $\mathbb{R}$, $\mu_t$ and $\sigma_t^2$ are the exact mean and variance of $X_t$ conditionally to $Y_{0:T}$ obtained from the Kalman smoother. In some sense, the weighted sample produced by a given algorithm is as accurate at estimating $X_t$ as an "independent" sample of size $N_{\mathrm{eff}}^{\mathrm{algo}}(t)$. The expression of $N_{\mathrm{eff}}^{\mathrm{algo}}(t)$ given in \eqref{eq:Neff} shows that it is inversely proportional to the quadratic error associated to a normalized estimator of ${\mathbb E}(X_t|Y_{0:T})$. To estimate the expectation in \eqref{eq:Neff} we use the mean value from $250$ repetitions of each algorithm with a number of particles chosen such that the computation time of each of them is the same.
\begin{figure}[!h]

\begin{minipage}[b]{1.0\linewidth}
  \centering
  \centerline{\epsfig{figure=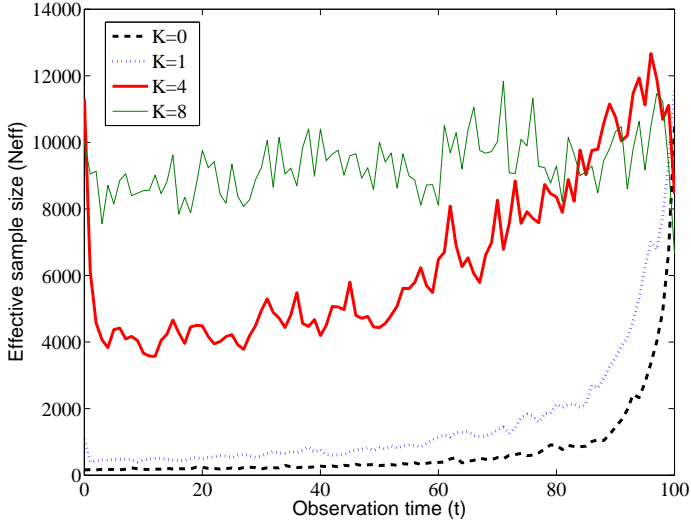,width=10cm}}
  \centerline{(a) Influence of the number of improvements $K$}\medskip
\end{minipage}
%\hfill
\begin{minipage}[b]{1.0\linewidth}
  \centering
  \centerline{\epsfig{figure=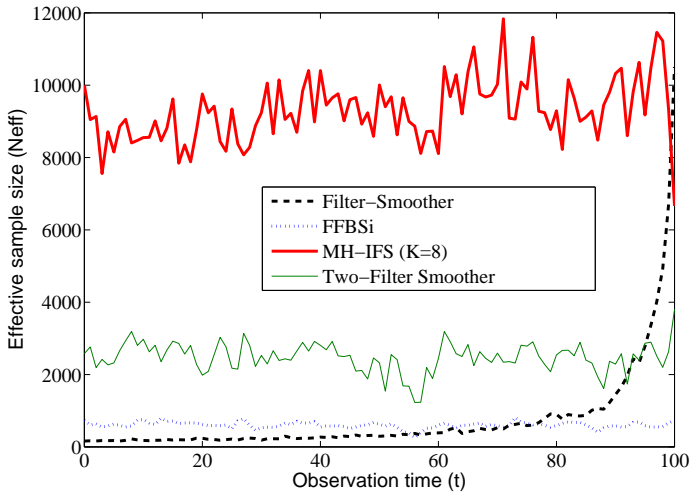,width=10cm}}
  \centerline{(b) Comparison of four linear smoothing algorithms}\medskip
\end{minipage}
\caption{Average effective sample size for each of the 100 time steps of the LGM using different smoothing algorithms for a fixed CPU time.}
\label{fig:LGM_Neff}
\end{figure}

Figure \ref{fig:LGM_Neff}.a shows that when the number of improvements increases, the degeneracy of the particle population for small values of $t$ decreases and for $K=8$ all the time steps have the same diversity.

Figure \ref{fig:LGM_Neff}.b displays the effective sample size of the four linear smoothing algorithms. As expected, the Filter-Smoother is highly degenerated for small values of $t$ as opposed to the other algorithms. Furthermore, the MH-IFS clearly outperforms all others within a fixed computational time. In order to check that this efficiency is not due to the fact that the LGM allows to easily implement the Gibbs sampler, we now turn to a model where a rejection sampling is required.

\subsection{Stochastic Volatility Model}
\label{subsec:SVM}
StoVolM have been introduced in financial time series modeling to capture more realistic features than ARCH/GARCH models (\cite{hull:white:1987}). Despite its apparent simplicity, the following equations do not allow to directly simulate according to $r_t(u,w;\cdot) \propto m(u,\cdot)g(\cdot,y_t)m(\cdot,w)$:
\begin{equation*}
X_{t+1} = \alpha X_t + \sigma U_{t+1}\eqsp, \quad
Y_t = \beta \rme^{\frac{X_t}{2}} V_t\eqsp,
\end{equation*}
where $X_0\sim\mathcal{N}\left(0,\frac{\sigma^2}{1-\alpha^2}\right)$, $U_t$ and $V_t$ are independent standard gaussian random variables. $T+1=101$ observations
were generated using the model with $\alpha = 0.3$, $\sigma= 0.5$ and $\beta = 1$ in order to estimate the effective sample size defined in \eqref{eq:Neff}. The true values of $\mu_t$ and $\sigma_t$ cannot be computed explicitly so they are estimated by running the MH-IFS with $N=650000$.

\subsubsection{Gibbs sampler}
In the StoVolM, the Gibbs sampler requires to sample exactly from
\begin{equation} \label{eq:DirectGibbsStoVolM}
r_t(u,w;x) \propto \exp\left\{-\frac{\e^{-x}}{2\beta^2} y_t^2 - \frac{1+\alpha^2}{2\sigma^2}\left[ x-\left(\frac{\alpha}{1+\alpha^2}(u+w) - \frac{\sigma^2/2}{1+\alpha^2}\right)\right]^2 \right\} \eqsp,
\end{equation}
for $1 \leq t \leq T-1$ (the cases $t=0$ and $t=T$ are dealt with in a similar way) which does not correspond to a classical distribution. However, we propose here to implement a rejection sampling. The first idea is to sample the proposal candidate $X=x$ according to the \emph{a priori} distribution of $X_t$ conditionally to $X_{t-1}=u$ and $X_{t+1}=w$. The corresponding ratio of acceptance is then given by $(|y_t|/\beta) \exp\left\{-(x-1)/2 - \e^{-x}y_t^2/(2\beta^2)\right\}$ and will obviously lead to poor results for small values of $y_t$. To counterbalance the effect of $y_t$ in the acceptance rate, the proposal distribution should also take the value of $y_t$ into account; we then rewrite \eqref{eq:DirectGibbsStoVolM} for any $\gamma_t \geq 0$ (possibly depending on $y_t$):
\begin{equation} \label{eq:GibbsStoVolMwithGamma}
r_t(u,w;x) \propto \e^{-\frac{\gamma_t}{2}x-\frac{\e^{-x}}{2\beta^2} y_t^2} \times \exp\left\{ - \frac{1+\alpha^2}{2\sigma^2}\left[ x-\left(\frac{\alpha}{1+\alpha^2}(u+w) - \frac{\sigma^2/2}{1+\alpha^2}(1-\gamma_t)\right)\right]^2 \right\} \eqsp,
\end{equation}
which suggests to propose $x$ according to $\mathcal{N}\left( \frac{\alpha}{1+\alpha^2}(u+w) - \frac{\sigma^2/2}{1+\alpha^2}(1-\gamma_t), \frac{\sigma^2}{1+\alpha^2}\right)$ and to accept it with a probability given by:
\begin{equation} \label{eq:GibbsStoVolMRatio}
\left(\frac{|y_t|}{\gamma_t^{1/2}\beta}\right)^{\gamma_t} \exp\left\{-\frac{\gamma_t}{2}(x-1) - \frac{\e^{-x}}{2\beta^2}y_t^2\right\} \eqsp.
\end{equation}
An optimal choice for $\gamma_t$ would consist in maximizing the smoothed expectation of \eqref{eq:GibbsStoVolMRatio} but this quantity is intractable. An intuitive choice for $\gamma_t$ is then:
\begin{equation} \label{eq:GammaStoVolM}
\gamma_t = \begin{cases}
(|y_t|/\beta)^2\eqsp, & \mathrm{if }\ |y_t| \leq \beta\eqsp, \\
|y_t|/\beta\eqsp, & \mathrm{if }\ |y_t| > \beta\eqsp.
\end{cases}
\end{equation}
Indeed, for small values of $y_t$, \eqref{eq:GibbsStoVolMRatio} is then close to one and for bigger values, the exponential becomes very small but the first term remains non-neglectable.

\begin{figure}[!h]

\begin{minipage}[b]{1.0\linewidth}
  \centering
  \centerline{\epsfig{figure=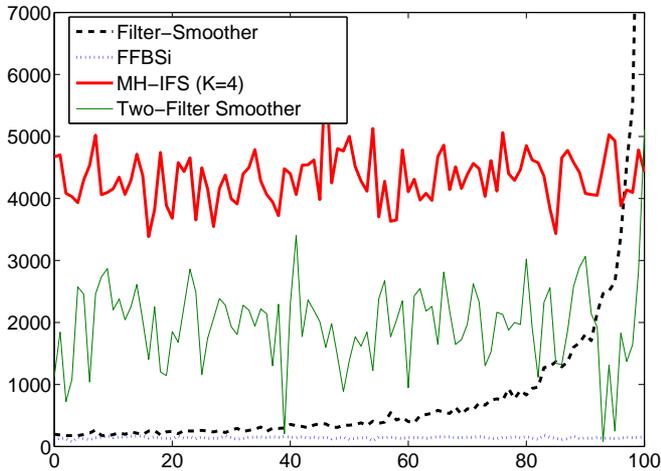,width=10cm}}
\end{minipage}
\caption{Average effective sample size for each of the 100 time steps of the StoVolM using different smoothing algorithms for a fixed CPU time.}
\label{fig:SVM_Neff}
\end{figure}

The Improved Filter-Smoother used to generate Figure \ref{fig:SVM_Neff} performs simulations using the Gibbs sampler with the previous rejection sampling. We can see that this algorithm still leads to better results than the other ones within an equivalent computational time.

In many instances (for example Expectation-Maximization algorithm, score computation), it is necessary to estimate smoothed additive functionals such as $\Post{0:T}{T}(H)$ where for all $x_{0:T} \in \Xset^{T+1}$, $H(x_{0:T}) = \sum_{t=0}^T x_t$. In order to assess the smoothing algorithms on this matter, $T+1=1001$ observations were generated. As seen before, the computational cost of the MH-IFS is linear in $N$ which is verified by numerical experiments in Figure \ref{fig:LGM_CPU}.

\begin{figure}[!h]

\begin{minipage}[b]{1.0\linewidth}
  \centering
  \centerline{\epsfig{figure=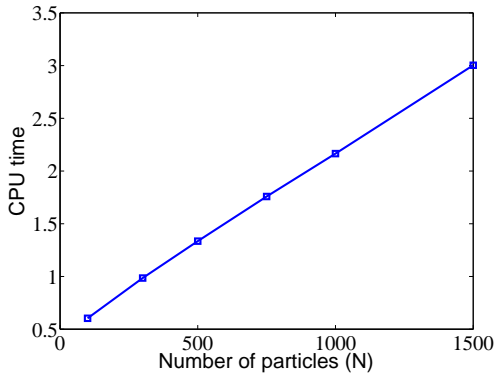,width=7cm}}
\end{minipage}
\caption{Average CPU time for computing a smoothed additive functional with the MH-IFS as a function of the number of particles.}
\label{fig:LGM_CPU}
\end{figure}

\begin{figure}[!h]

\begin{minipage}[b]{1.0\linewidth}
  \centering
  \centerline{\epsfig{figure=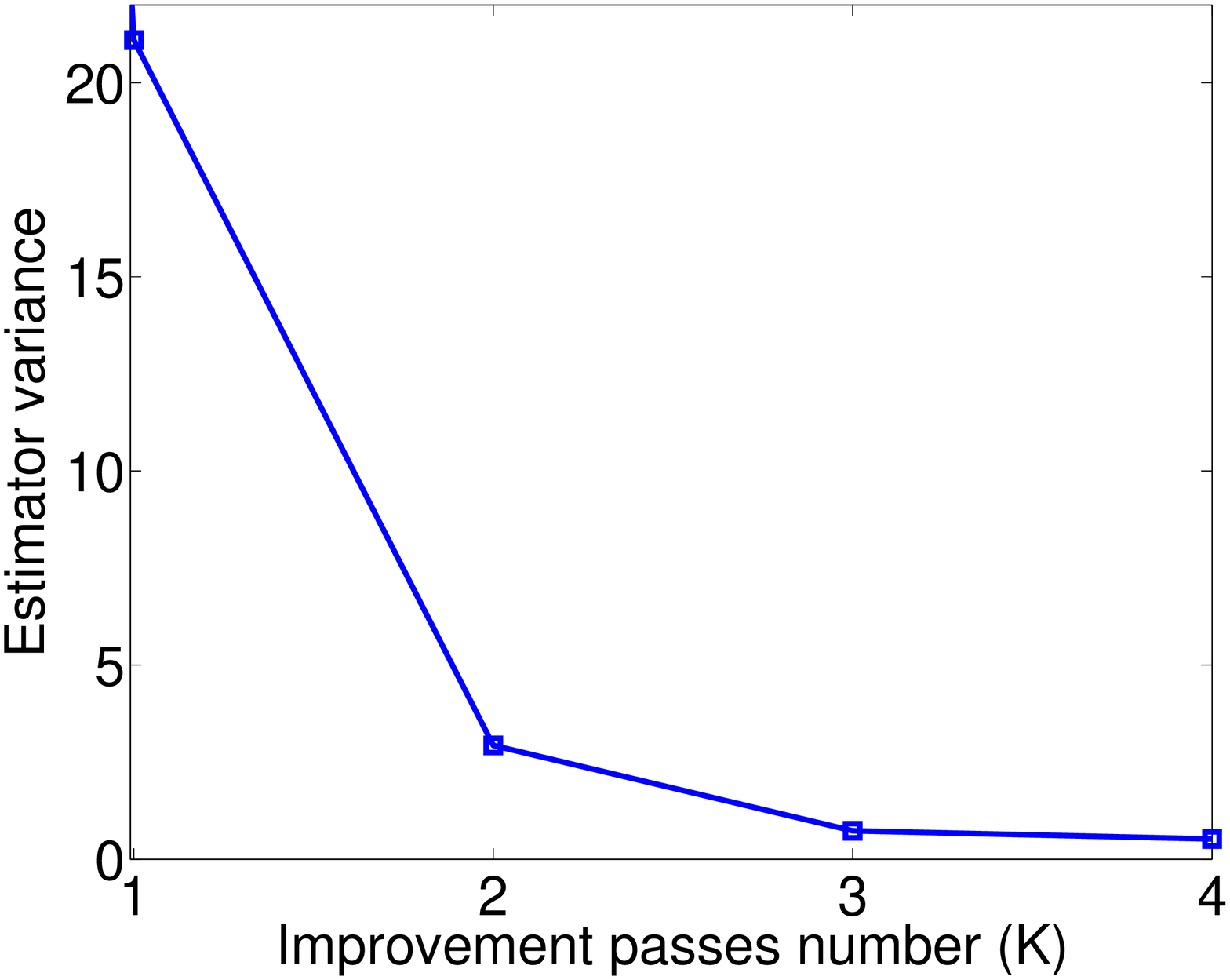,width=6.5cm}}
  \centerline{(a) Variance of the Improved Filter-Smoother}
  \centerline{according to the number of improvement passes $K$}\medskip
\end{minipage}
%\hfill
\begin{minipage}[b]{1.0\linewidth}
  \centering
  \centerline{\epsfig{figure=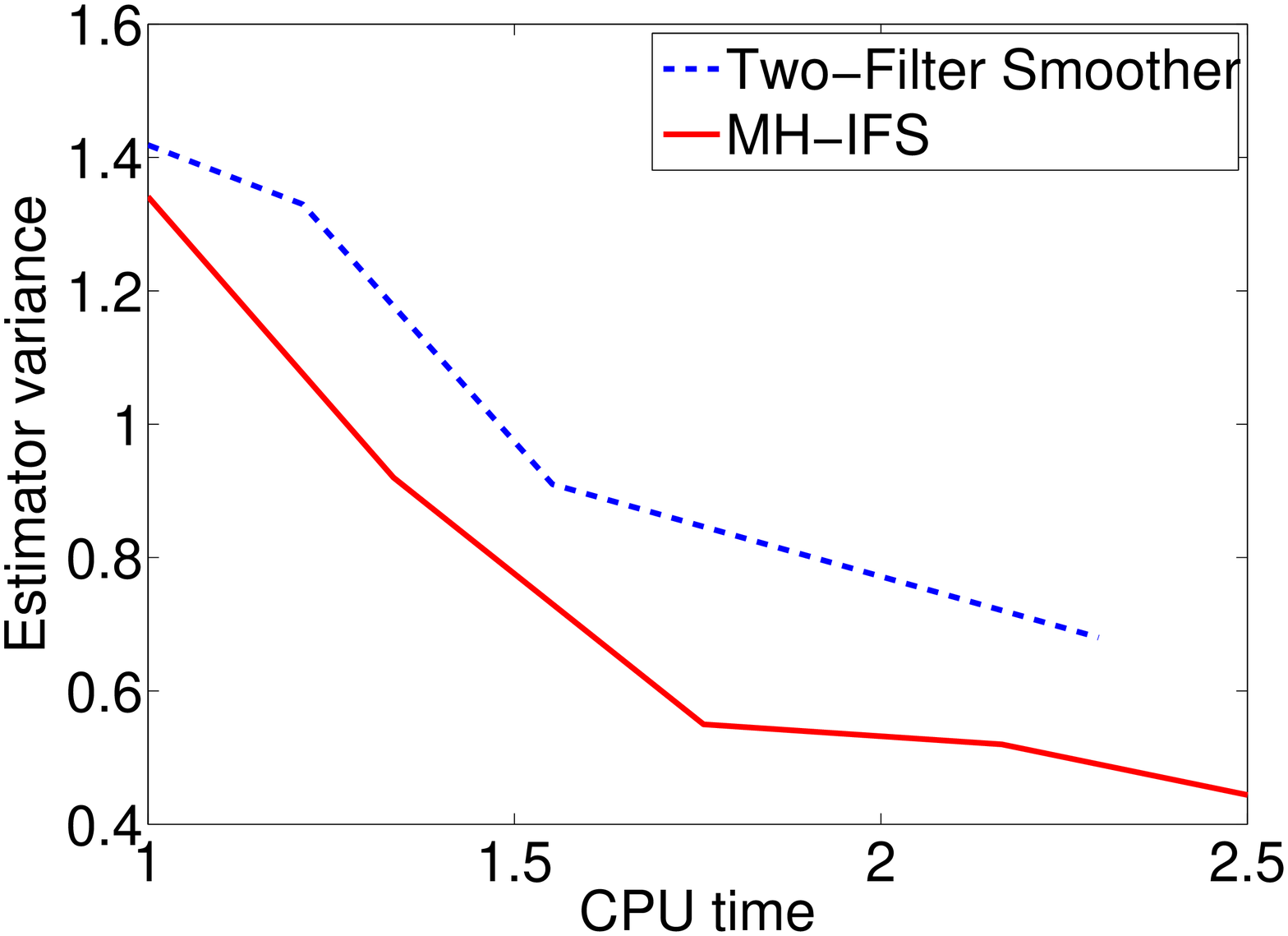,width=6.5cm}}
  \centerline{(b) Variance of the Two-Filter Smoother and the Improved }
  \centerline{Filter-Smoother according to the CPU time}\medskip
\end{minipage}
%\hfill
\begin{minipage}[b]{1.0\linewidth}
  \centering
  \centerline{\epsfig{figure=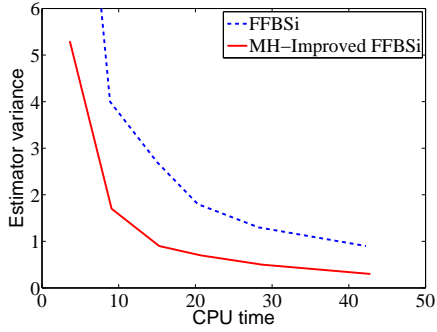,width=6.5cm}}
  \centerline{(c) Variance of the FFBSi and its improved version}
  \centerline{according to the CPU time}\medskip
\end{minipage}
\caption{Variance of different smoothed additive functional particle estimators in the StoVolM.}
\label{fig:SVM_Var}
\end{figure}
Figure \ref{fig:SVM_Var}.a shows that the variance vanishes quickly with the number of improvement passes and only $4$ iterations of the Markov chains are sufficient to get an efficient estimator. Then, the variances displayed in Figure \ref{fig:SVM_Var}.b allow again to draw the conclusion that for a fixed CPU time, the MH-IFS is more efficient than the Two-Filter. Finally, one improvement pass has been applied to the particle paths given by the FFBSi. The variance reduction is again significant as shown in Figure \ref{fig:SVM_Var}.c.

\subsubsection{Metropolis-within-Gibbs and confidence interval}
In order to assess Algorithm \ref{alg:imprMWG} in the case where the Gibbs sampler could not be implemented, we now turn to the Metropolis-within-Gibbs sampler which is implemented by using again the proposal distribution:
\begin{equation*}
r_t(u,w;\cdot) \sim \mathcal{N}\left( \frac{\alpha}{1+\alpha^2}(u+w) - \frac{\sigma^2/2}{1+\alpha^2}(1-\gamma_t), \frac{\sigma^2}{1+\alpha^2}\right) \eqsp,
\end{equation*}
where $\gamma_t$ is defined in \eqref{eq:GammaStoVolM}, and the associated acceptance rate is now given by:
\begin{equation*}
\alpha_t(u,v,w;x) = \exp\left\{-\frac{\gamma_t}{2}(x-v) - \frac{\e^{-x}-\e^{-v}}{2\beta^2}y_t^2\right\} \wedge 1 \eqsp.
\end{equation*}

Figure \ref{fig:GibbsVsMwG} compares the empirical variance of the Gibbs and Metropolis-within-Gibbs samplers of the smoothed additive functional conditionally to the $T+1=1001$ observations used previously. The efficiency of both algorithms is equivalent, showing that Algorithm \ref{alg:imprMWG} remains a great performer even when exact \emph{a posteriori} simulation is not possible.

\begin{figure}[h]

\begin{minipage}[b]{1.0\linewidth}
  \centering
  \centerline{\epsfig{figure=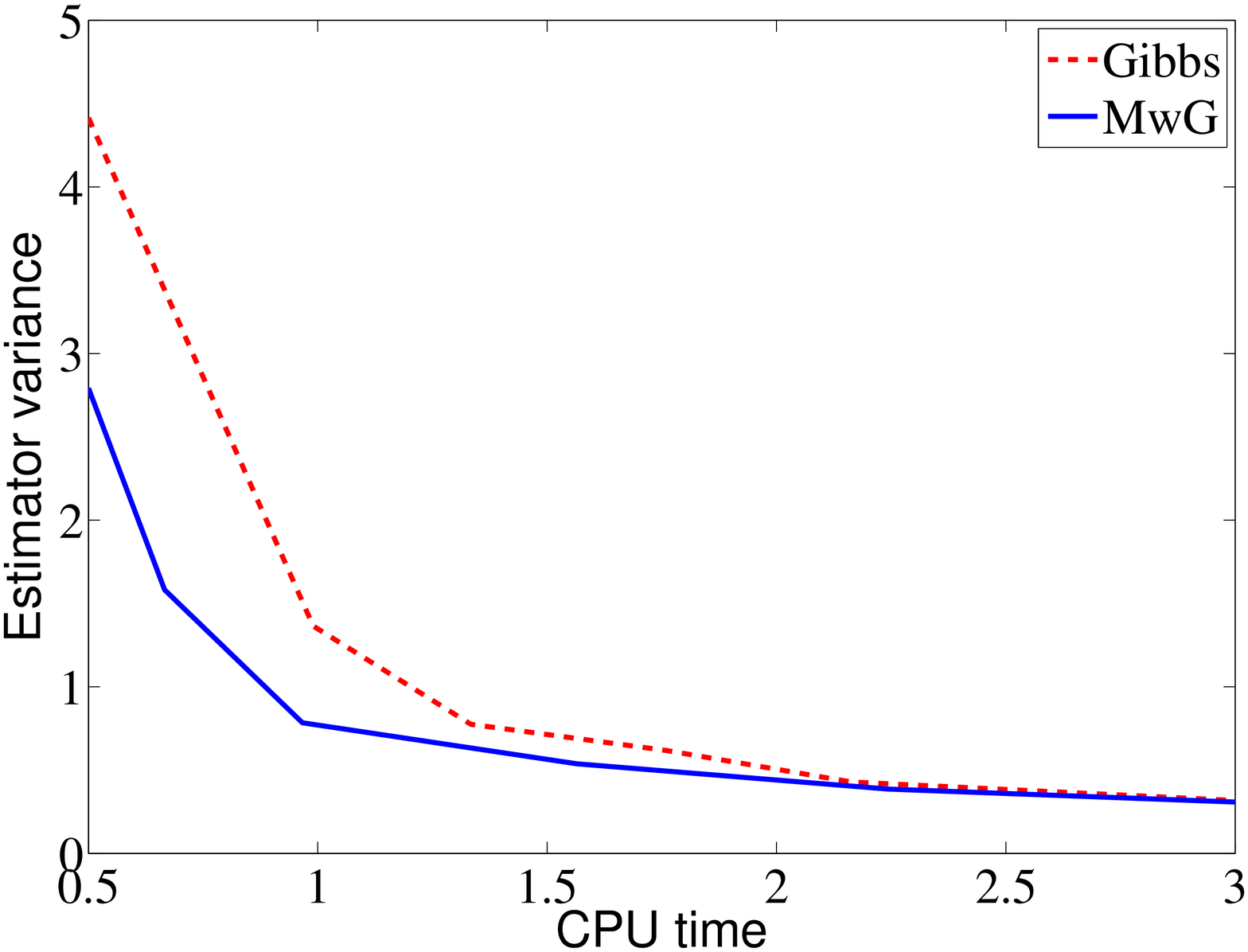,width=8cm}}
\end{minipage}
\caption{Variance of the Gibbs and Metropolis-within-Gibbs samplers according to the CPU time.}
\label{fig:GibbsVsMwG}
\end{figure}

Finally, Theorem \ref{thm:clt} is assessed in Figure \ref{fig:clt}. The empirical variance of the estimator given by Algorithm \ref{alg:imprMWG} run with $K_N \propto \ln N$ has been computed over $250$ runs using the Gibbs and the Metropolis-within-Gibbs samplers for different number of particles $N$ and compared to the asymptotic variance $\Var{\genTarget}{h}/N$ estimated through only one population of particles. The results show that it is possible in practice to get a confidence interval for the approximation with only one run of Algorithm \ref{alg:imprMWG} of complexity $\mathcal{O}(N \ln N)$.

\begin{figure}[!h]

\begin{minipage}[b]{1.0\linewidth}
  \centering
  \centerline{\epsfig{figure=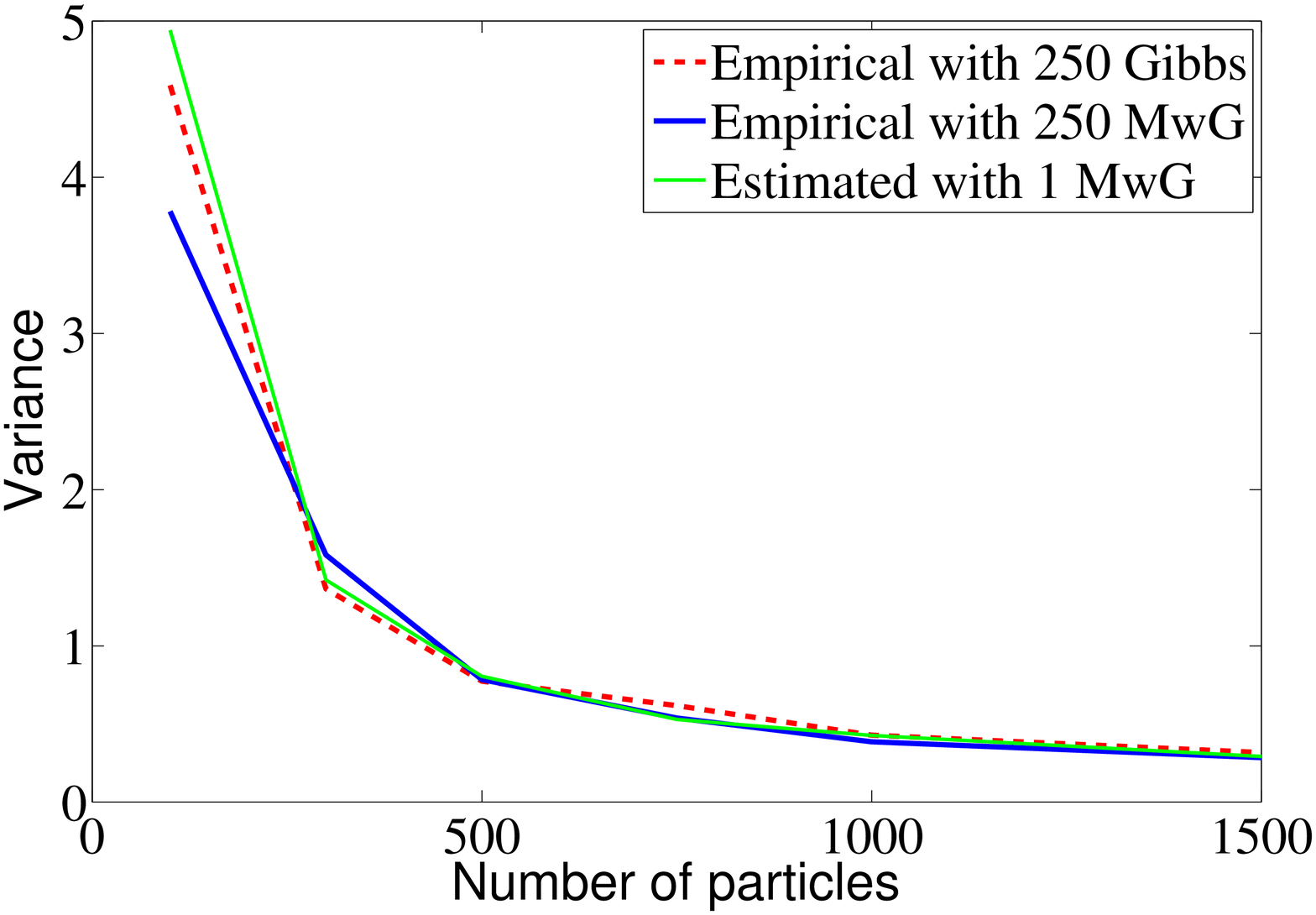,width=8cm}}
\end{minipage}
\caption{Algorithm \ref{alg:imprMWG} variance according to the number of observations.}
\label{fig:clt}
\end{figure}

\section{Conclusion}
%\clearpage
At first sight, one could fear that the MH-IPS is too slow since the updates concern only one component at a time. The various comparisons performed for a {\em fixed} CPU time in the previous section show that this {\em is not} the case at all. Roughly speaking, a backward pass in the MH-IPS proposes to sequentially modify each component of the $N$ parallel Markov chains. This can be seen as one run of $N$ particles through $T+1$ observations which is computationally equivalent to one pass of the bootstrap filter. By empirical evidences, we have seen that only a few backward passes ($K=4$ or $8$ in the examples) of the MH-IFS sweep out the degeneracy of the ancestors by extending backward in time the diversity of the particles.
%In some sense, the fact that the approximation is quite good with only few backward passes is is not very surprising since the CLT in the theoretical results is obtained for $K\propto \ln N$.

This method is linear in $N$ and outperforms other existing algorithms as the FFBSi or the Two-Filter within a fixed CPU time. These performance results may be explained by the fact that in the FFBSi algorithm, the points are sampled in the forward pass once and for all; the backward pass in the FFBSi only modifies the weights of the particles without moving them. On the contrary, the MH-IPS allows in the backward pass to move the particles and thus to explore interesting regions of the posterior distribution. In the Two-Filter sampler, two populations (the "forward" population and the "backward" population) evolve {\em independently}. At time $t$, a particle is sampled after choosing a couple of particles at time $t-1$ and $t+1$. The two components of these couples belong to independent populations and it is likely that even if their weights are respectively high, associating these independent particles could be detrimental to the approximation. On the contrary, in the MH-IPS, even if the Markov chains are independent, the proposed modification of the component is sampled with respect to its two neighbors which both belong to the {\em same} Markov chain.
 Note that we did not compare this algorithm to the Population Monte Carlo by Markov chains (PMCMC) samplers introduced by \cite{andrieu:doucet:holenstein:2010} since the framework here is not the Bayesian inference of parameterized Markov chains.

 Another major advantage here is the fact that a CLT can be obtained with a very simple asymptotic variance which can be estimated with only one run of the Algorithm and a complexity in $\mathcal{O}(N \ln N)$. This is totally new in comparison to all the smoothing algorithms proposed in the literature so far, where the asymptotic variances are usually particularly involved. Thus, for a fixed CPU time and only one run, this algorithm is able to produce both approximations of the smoothing distributions and confidence intervals.

Finally, we only focus here on the MH-IFS since it is efficient enough for our purpose. Of course, many other variants with different SMC-based approximations in the initialization step may be performed. In the context of the paper, the MH-IPS only uses the SMC-based approximation once before starting independent MCMC Markov chains. The empirical performances of this algorithm, namely with respect to the diversity of the population and the precision of the approximation, seem to us convincing enough to let the Markov chains evolve independently without trying to interact them again. Of course, as previously noted in \cite{gilks:berzuini:2001b}, in some different contexts, where for example, the observations are available sequentially whereas approximations of the smoothing distributions are needed at each time, some variants with SMC steps mixed with MCMC steps can also be elaborated. Nevertheless, in the framework of this paper, the number $T$ of the observations is fixed and we only focus here on how the independent MCMC steps drastically improve the first approximation obtained by SMC algorithms. In this context, there is no need to interact again the Markov chains; this allows to keep the diversity of the population while approximations and confidence intervals are obtained without effort.

\appendix
\section*{Appendix}
\section{Proof of Proposition~\ref{prop:bestWeights}}
For all $k \geq 0$, the bias plus variance decomposition writes
\begin{align*}
&\mathbb{E}\left[ \left( \sum_{i=1}^N \genWeight{i}{N} h(\genPart[k]{i}{N})-\genTarget h \right)^2\right] \\
&\quad =\left\{ \mathbb{E}\left[\sum_{i=1}^N \genWeight{i}{N} h(\genPart[k]{i}{N})\right] - \genTarget h\right\}^2 +\var{\sum_{i=1}^N \genWeight{i}{N} h(\genPart[k]{i}{N})}
\end{align*}
\begin{multline}
\quad = \left\{ \mathbb{E}\left[\sum_{i=1}^N \genWeight{i}{N} h(\genPart[k]{i}{N})\right] - \genTarget h\right\}^2 +\var{ \mathbb{E}\left[\sum_{i=1}^N \genWeight{i}{N} h(\genPart[k]{i}{N})\middle| \mathcal{F}_0^N \right]}  \\
+ \mathbb{E}\left[\var[\mathcal{F}_0^N]{ \sum_{i=1}^N \genWeight{i}{N} h(\genPart[k]{i}{N})}\right] \eqsp,  \label{eq:quadraticErrorDecomposition}
\end{multline}
where $\mathcal{F}_0^N=\sigma\left\{\genPart{i}{N},\genWeight{i}{N},\ i \in\{1,\dots,N\} \right\}$. Now, by definition of $\genPart[k]{i}{N},\ i \in \{1,\ldots,N\}$,
\begin{equation*}
\mathbb{E}\left[\sum_{i=1}^N \genWeight{i}{N} h(\genPart[k]{i}{N})\middle| \mathcal{F}_0^N \right] = \sum_{i=1}^N \genWeight{i}{N} \moveKernel^kh(\genPart{i}{N})\eqsp,
\end{equation*}
and the first term of the RHS of \eqref{eq:quadraticErrorDecomposition} is bounded by
\begin{equation*}
\left| \mathbb{E}\left[\sum_{i=1}^N \genWeight{i}{N} h(\genPart[k]{i}{N})\right] - \genTarget h\right| \leq \mathbb{E}\left[\sum_{i=1}^N \genWeight{i}{N} \left| \moveKernel^kh(\genPart{i}{N}) - \genTarget h \right| \right] \eqsp.
\end{equation*}
The RHS goes to $0$ as $k$ tends to infinity by the Lebesgue convergence theorem since $h$ is bounded. The same argument holds to handle the second term of the RHS of \eqref{eq:quadraticErrorDecomposition}:
\begin{multline*}
\lim_{k \rightarrow \infty} \var{ \mathbb{E}\left[\sum_{i=1}^N \genWeight{i}{N} h(\genPart[k]{i}{N})\middle| \mathcal{F}_0^N \right]} = \lim_{k \rightarrow \infty} \var{ \sum_{i=1}^N \genWeight{i}{N} \moveKernel^kh(\genPart{i}{N})} \\
= \var{ \sum_{i=1}^N \genWeight{i}{N} \genTarget h} = \var{ \genTarget h} = 0\eqsp.
\end{multline*}
Finally, conditionally to $\mathcal{F}_0^N$, the random variables $(\genPart[k]{i}{N})_{i=1}^N$ are independent and
\begin{multline*}
\var[\mathcal{F}_0^N]{ \sum_{i=1}^N \genWeight{i}{N} h(\genPart[k]{i}{N})} = \sum_{i=1}^N (\genWeight{i}{N})^2 \var[\mathcal{F}_0^N]{h(\genPart[k]{i}{N})} \\
= \sum_{i=1}^N (\genWeight{i}{N})^2 \left[ \moveKernel^kh^2(\genPart{i}{N}) - \left(\moveKernel^kh(\genPart{i}{N})\right)^2\right] \eqsp,
\end{multline*}
leading to
\begin{equation*}
\lim_{k \rightarrow \infty} \mathbb{E}\left[\var[\mathcal{F}_0^N]{ \sum_{i=1}^N \genWeight{i}{N} h(\genPart[k]{i}{N})}\right] = \left[\genTarget h^2 - (\genTarget h)^2\right] \mathbb{E}\left[\sum_{i=1}^N (\genWeight{i}{N})^2\right]\eqsp.
\end{equation*}
This shows the first part of the proposition. Now, by the Cauchy-Schwartz inequality:
\begin{equation*}
1=\sum_{i=1}^N \genWeight{i}{N} \leq \left(\sum_{i=1}^N (\genWeight{i}{N})^2\right)^{1/2} N^{1/2}\eqsp,
\end{equation*}
i.e. $\sum_{i=1}^N (\genWeight{i}{N})^2 \geq 1/N$ with equality only for $\genWeight{i}{N} = 1/N$ for all $i$. The proof is completed.

\section{Proof of Theorem~\ref{thm:clt}}
Let $\gamma_N=k_N+\ln N /(2 \ln \beta)$. Under the assumptions of Theorem~\ref{thm:clt}, $\lim_{N \to \infty} \gamma_N=\infty$. Now, write
\begin{multline}\label{eq:cltDecompo}
 N^{-1/2} \sum_{i=1}^N \left[h(\genRePart[k_N]{i}{N}) - \genTarget h \right] = N^{-1/2} \sum_{i=1}^N \left[\moveKernel^{k_N}h(\genRePart{i}{N}) - \genTarget h \right]\\
 + N^{-1/2} \sum_{i=1}^N \left[h(\genRePart[k_N]{i}{N}) - \moveKernel^{k_N}h(\genRePart{i}{N}) \right]\eqsp.
\end{multline}
Since $V\geq 1$, \A{\ref{assum:clt}}-\eqref{item:third} implies that $\{N^{-1}\sum_{i=1}^N V(\genRePart{i}{N})\}_{N \geq 1}$ is bounded in probability. Combining this with
\begin{equation*}
\left|N^{-1/2} \sum_{i=1}^N \left[\moveKernel^{k_N}h(\genRePart{i}{N}) - \genTarget h \right]\right| \leq N^{-1/2}\beta^{k_N}\sum_{i=1}^N V(\genRePart{i}{N}) = \beta^{\gamma_N} \times N^{-1}\sum_{i=1}^N V(\genRePart{i}{N})\eqsp,
\end{equation*}
shows that the first term of the RHS of \eqref{eq:cltDecompo} converges in probability to $0$.
Now, the second term of the RHS of \eqref{eq:cltDecompo} writes
$$
N^{-1/2} \sum_{i=1}^N \left[h(\genRePart[k_N]{i}{N}) - \moveKernel^{k_N}h(\genRePart{i}{N}) \right]=\sum_{i=1}^N \left\{ U_{N,i}-\PE[\fcal_{N,i-1}]{U_{N,i} }\right\}\eqsp,
$$
where
\begin{align*}
&U_{N,i} = N^{-1/2}h\left(\genRePart[k_N]{i}{N}\right)\eqsp, \\
& \mathcal{F}_{N,i} = \sigma\left\{ \genRePart{\ell}{N},\genRePart{j}{N}[k_N], (\ell,j) \in \{1,\dots,i\}^2  \right\} \eqsp.
\end{align*}
To apply \cite[Theorem A3]{douc:moulines:2008} with $M_N=N$ and $\sigma^2 = \Var{\genTarget}{h}$, we need to check that
\begin{align}
& \sum_{i=1}^N \var[\mathcal{F}_{N,i-1}]{U_{N,i}} \plim \sigma^2 \eqsp, \label{eq:conv-varCond-sigma}\\
& \sum_{i=1}^N \PE[\mathcal{F}_{N,i-1}]{U_{N,i}^2 \one_{\{|U_{N,i}|\geq \varepsilon\}}} \plim 0\ ,  \quad \mbox{for any }\epsilon>0\eqsp. \label{eq:conv-tightness-zero}
 \end{align}
We start with \eqref{eq:conv-varCond-sigma}. Write
\begin{multline}\label{eq:triangleVar}
\left| \sum_{i=1}^N \var[\mathcal{F}_{N,i-1}]{U_{N,i}} - \sigma^2 \right| \\
\leq N^{-1} \sum_{j=1}^N  \left| \moveKernel^{k_N} h^2(\genRePart{j}{N})-\genTarget h^2 \right| + N^{-1} \sum_{j=1}^N  \left| \left[\moveKernel^{k_N} h(\genRePart{j}{N})\right]^2-(\genTarget h)^2 \right|\eqsp.
\end{multline}
As $h^2 \in \calC_{V}$, the first term of the RHS is upper-bounded by
$$\beta^{k_N} \times N^{-1}\sum_{i=1}^N V(\genRePart{i}{N})\eqsp,$$
which converges in probability to $0$. Now, note that the functions $h^2$ and $V$ are in $\calC_V$ and $|h|\leq \max(h^2,1)\leq \max(h^2,V)$ so that $h\in \calC_V$. By applying $|a^2-b^2| \leq |a-b|^2 + 2|b||a-b|$, the second term of \eqref{eq:triangleVar} is then upper-bounded by
$$\beta^{2k_N} \times N^{-1}\sum_{i=1}^N \left[V(\genRePart{i}{N})\right]^2 + 2|\genTarget h| \beta^{k_N} \times N^{-1}\sum_{i=1}^N V(\genRePart{i}{N})\eqsp,$$
which again converges in probability to $0$. This proves \eqref{eq:conv-varCond-sigma}. Now, let $\varepsilon > 0$,
\begin{align}
&\sum_{i=1}^N \mathbb{E}\left[U_{N,i}^2 \one_{\{|U_{N,i}|\geq \varepsilon\}}\middle| \mathcal{F}_{N,i-1} \right] \nonumber\\
&\quad \leq \genTarget \left[h^2\one_{\{h^2 \geq \varepsilon^2 N \}}\right]
+ N^{-1}\sum_{i=1}^N \left| \moveKernel^{k_N} \left[h^2(\genRePart{i}{N})\one_{\{h^2(\genRePart{i}{N}) \geq \varepsilon^2 N \}}\right]-\genTarget \left[h^2\one_{\{h^2 \geq \varepsilon^2 N \}}\right] \right| \nonumber\\
&\quad \leq \genTarget \left[h^2\one_{\{h^2 \geq \varepsilon^2 N \}}\right] + \beta^{k_N}\times N^{-1}\sum_{i=1}^N V(\genRePart{i}{N}) \eqsp, \label{eq:ineg-tightness-residus}
\end{align}
where $h^2\one_{\{h^2\geq \varepsilon^2 N \}} \in\calC_V$. Since $h^2 \in \calC_{V}$,\A{\ref{assum:clt}}-\eqref{item:first} implies that $\genTarget h^2 <\infty$. Then, the RHS of
\eqref{eq:ineg-tightness-residus} converges in probability to $0$, showing \eqref{eq:conv-tightness-zero}. The proof is completed.

\bibliographystyle{chicago}
\bibliography{motherofallbibs}

\begin{thebibliography}{}

\bibitem[\protect\citeauthoryear{Andrieu, Doucet, and Holenstein}{Andrieu
  et~al.}{2010}]{andrieu:doucet:holenstein:2010}
Andrieu, C., A.~Doucet, and R.~Holenstein ({2010}).
\newblock Particle markov chain monte carlo methods.
\newblock {\em J. Roy. Statist. Soc. B\/}~{\em {72}\/}({Part 3}), {269--342}.

\bibitem[\protect\citeauthoryear{Briers, Doucet, and Maskell}{Briers
  et~al.}{2010}]{briers:doucet:maskell:2010}
Briers, M., A.~Doucet, and S.~Maskell (2010).
\newblock Smoothing algorithms for state-space models.
\newblock {\em {A}nnals {I}nstitute {S}tatistical {M}athematics\/}~{\em
  62\/}(1), 61--89.

\bibitem[\protect\citeauthoryear{Capp\'{e}, Moulines, and Ryd\'{e}n}{Capp\'{e}
  et~al.}{2005}]{cappe:moulines:ryden:2005}
Capp\'{e}, O., E.~Moulines, and T.~Ryd\'{e}n (2005).
\newblock {\em Inference in Hidden {M}arkov Models}.
\newblock Springer.

\bibitem[\protect\citeauthoryear{Carpenter, Clifford, and Fearnhead}{Carpenter
  et~al.}{1999}]{carpenter:clifford:fearnhead:1999}
Carpenter, J., P.~Clifford, and P.~Fearnhead (1999).
\newblock An improved particle filter for non-linear problems.
\newblock {\em IEE Proc., Radar Sonar Navigation\/}~{\em 146}, 2--7.

\bibitem[\protect\citeauthoryear{Chopin}{Chopin}{2004}]{chopin:2004}
Chopin, N. (2004).
\newblock Central limit theorem for sequential {M}onte {C}arlo methods and its
  application to {B}ayesian inference.
\newblock {\em Ann. Statist.\/}~{\em 32\/}(6), 2385--2411.

\bibitem[\protect\citeauthoryear{Chopin, Jacob, and Papaspiliopoulos}{Chopin
  et~al.}{2011}]{chopin:jacob:papaspiliopoulos:2011}
Chopin, N., P.~Jacob, and O.~Papaspiliopoulos (2011).
\newblock $smc^2$: A sequential monte carlo algorithm with particle markov
  chain monte carlo updates.
\newblock Preprint, arXiv:1011.1528v2.

\bibitem[\protect\citeauthoryear{{Del Moral}}{{Del
  Moral}}{2004}]{delmoral:2004}
{Del Moral}, P. (2004).
\newblock {\em {F}eynman-Kac {F}ormulae. {G}enealogical and Interacting
  Particle Systems with Applications}.
\newblock Springer.

\bibitem[\protect\citeauthoryear{Del~Moral and Guionnet}{Del~Moral and
  Guionnet}{1999}]{delmoral:guionnet:1999}
Del~Moral, P. and A.~Guionnet (1999).
\newblock Central limit theorem for nonlinear filtering and interacting
  particle systems.
\newblock {\em Ann. Appl. Probab.\/}~{\em 9\/}(2), 275--297.

\bibitem[\protect\citeauthoryear{Douc, Capp\'{e}, and Moulines}{Douc
  et~al.}{2005}]{douc:cappe:moulines:2005}
Douc, R., O.~Capp\'{e}, and E.~Moulines (2005, September).
\newblock Comparison of resampling schemes for particle filtering.
\newblock In {\em 4th International Symposium on Image and Signal Processing
  and Analysis (ISPA)}, Zagreb, Croatia.
\newblock arXiv: cs.CE/0507025.

\bibitem[\protect\citeauthoryear{Douc, Garivier, Moulines, and Olsson}{Douc
  et~al.}{2010}]{douc:garivier:moulines:olsson:2010}
Douc, R., A.~Garivier, E.~Moulines, and J.~Olsson (2010, 4).
\newblock Sequential {M}onte {C}arlo smoothing for general state space hidden
  {M}arkov models.
\newblock {\em To appear in Ann. Appl. Probab.\/}.

\bibitem[\protect\citeauthoryear{Douc and Moulines}{Douc and
  Moulines}{2008}]{douc:moulines:2008}
Douc, R. and E.~Moulines (2008).
\newblock Limit theorems for weighted samples with applications to sequential
  {M}onte {C}arlo methods.
\newblock {\em Ann. Statist.\/}~{\em 36\/}(5), 2344--2376.

\bibitem[\protect\citeauthoryear{Doucet, Godsill, and Andrieu}{Doucet
  et~al.}{2000}]{doucet:godsill:andrieu:2000}
Doucet, A., S.~Godsill, and C.~Andrieu (2000).
\newblock On sequential {M}onte-{C}arlo sampling methods for {B}ayesian
  filtering.
\newblock {\em Stat. Comput.\/}~{\em 10}, 197--208.

\bibitem[\protect\citeauthoryear{Doucet and Johansen}{Doucet and
  Johansen}{2009}]{doucet:johansen:2009}
Doucet, A. and A.~Johansen (2009).
\newblock A tutorial on particle filtering and smoothing: fifteen years later.
\newblock {\em Oxford handbook of nonlinear filtering\/}.

\bibitem[\protect\citeauthoryear{Fearnhead, Wyncoll, and Tawn}{Fearnhead
  et~al.}{2010}]{fearnhead:wyncoll:tawn:2010}
Fearnhead, P., D.~Wyncoll, and J.~Tawn (2010).
\newblock A sequential smoothing algorithm with linear computational cost.
\newblock {\em {B}iometrika\/}~{\em 97\/}(2), 447--464.

\bibitem[\protect\citeauthoryear{Gilks and Berzuini}{Gilks and
  Berzuini}{2001}]{gilks:berzuini:2001b}
Gilks, W.~R. and C.~Berzuini (2001).
\newblock Following a moving target---{M}onte {C}arlo inference for dynamic
  {B}ayesian models.
\newblock {\em J. Roy. Statist. Soc. B\/}~{\em 63\/}(1), 127--146.

\bibitem[\protect\citeauthoryear{Godsill, Doucet, and West}{Godsill
  et~al.}{2004}]{godsill:doucet:west:2004}
Godsill, S.~J., A.~Doucet, and M.~West (2004).
\newblock {M}onte {C}arlo smoothing for non-linear time series.
\newblock {\em J. Am. Statist. Assoc.\/}~{\em 99}, 156--168.

\bibitem[\protect\citeauthoryear{Hull and White}{Hull and
  White}{1987}]{hull:white:1987}
Hull, J. and A.~White (1987).
\newblock The pricing of options on assets with stochastic volatilities.
\newblock {\em J. Finance\/}~{\em 42}, 281--300.

\bibitem[\protect\citeauthoryear{Kitagawa}{Kitagawa}{1996}]{kitagawa:1996}
Kitagawa, G. (1996).
\newblock {M}onte-{C}arlo filter and smoother for non-{G}aussian nonlinear
  state space models.
\newblock {\em J. Comput. Graph. Statist.\/}~{\em 1}, 1--25.

\bibitem[\protect\citeauthoryear{Kitagawa}{Kitagawa}{1998}]{kitagawa:1998}
Kitagawa, G. (1998).
\newblock A self-organizing state-space model.
\newblock {\em J. Am. Statist. Assoc.\/}~{\em 93\/}(443), 1203--1215.

\bibitem[\protect\citeauthoryear{K\"{u}nsch}{K\"{u}nsch}{2000}]{kunsch:2000}
K\"{u}nsch, H.~R. (2000).
\newblock State space and hidden {M}arkov models.
\newblock In O.~E. Barndorff-Nielsen, D.~R. Cox, and C.~Kluppelberg (Eds.),
  {\em Complex Stochastic Systems}. CRC Press.

\bibitem[\protect\citeauthoryear{Liu and Chen}{Liu and
  Chen}{1998}]{liu:chen:1998}
Liu, J. and R.~Chen (1998).
\newblock Sequential {M}onte-{C}arlo methods for dynamic systems.
\newblock {\em J. Am. Statist. Assoc.\/}~{\em 93\/}(443), 1032--1044.

\bibitem[\protect\citeauthoryear{Olsson and Rydén}{Olsson and
  Rydén}{2010}]{olsson:ryden:2010}
Olsson, J. and T.~Rydén (2010).
\newblock Metropolising forward particle filtering backward sampling and
  rao-blackwellisation of metropolised particle smoothers.
\newblock Preprint, arXiv:1011.2153v1.

\end{thebibliography}

\end{document}